\newcommand{\cmark}{\ding{51}}%
\newcommand{\xmark}{\ding{55}}%
\spnewtheorem{thrm}{Theorem}{\bfseries}{\itshape}
\spnewtheorem{defn}[thrm]{Definition}{\bfseries}{\rmfamily}
\spnewtheorem{lem}[thrm]{Lemma}{\bfseries}{\itshape}
\spnewtheorem{cor}[thrm]{Corollary}{\bfseries}{\itshape}
\spnewtheorem{assumption}[thrm]{Assumption}{\bfseries}{\itshape}
\newcommand{\com}[1]{}
\newcommand{\ol}{OmniLedger\xspace}
\newcommand{\mx}{Monoxide\xspace}
\newcommand{\rc}{RapidChain\xspace}
\newcommand{\el}{Elastico\xspace}
\newcommand{\cs}{Chainspace\xspace}
\newcommand{\ie}{i.\,e.\xspace}
\newcommand{\eg}{e.\,g.\xspace}
\begin{document}

\setlength{\textfloatsep}{8pt} 
\setlength{\abovecaptionskip}{8pt}
\setlength{\belowcaptionskip}{8pt}
\title{ Divide \& Scale:\\
Formalization and Roadmap to Robust Sharding}
\titlerunning{Divide \& Scale}
%
 \author{Zeta Avarikioti\inst{1} \and
Antoine Desjardins\inst{2} \and
 Lefteris Kokoris-Kogias\inst{2,4}\and
 Roger Wattenhofer\inst{3}}
\authorrunning{Z. Avarikioti et al.}
%
\institute{TU Wien, Austria \and ISTA, Austria \and
ETH  Z{\"u}rich, Switzerland \and Mysten Labs}
%
\maketitle              
\begin{abstract}
Sharding distributed ledgers is a promising on-chain solution for scaling blockchains but lacks formal grounds, nurturing skepticism on whether such complex systems can scale blockchains securely. 
We fill this gap by introducing the first formal framework as well as a roadmap to robust sharding.
In particular, we first define the properties sharded distributed ledgers should fulfill. We build upon and extend the Bitcoin backbone protocol by defining \textit{consistency} and \textit{scalability}.
Consistency encompasses the need for atomic execution of cross-shard transactions to preserve safety, whereas scalability encapsulates the speedup a sharded system can gain in comparison to a non-sharded system.

Using our model, we explore the limitations of sharding. We show that a sharded ledger with $n$ participants cannot scale under a fully adaptive adversary, but it can scale up to $m$ shards where $n=c'm\log m$,
under an epoch-adaptive adversary; the constant $c'$ encompasses the trade-off between security and scalability.  
This is possible only if the sharded ledgers create succinct proofs of the valid state updates at every epoch.
We leverage our results to identify the sufficient components for robust sharding, which we incorporate in a protocol abstraction termed Divide \& Scale.
To demonstrate the power of our framework, we analyze the most prominent sharded blockchains (\el, \mx, \ol, \rc) and pinpoint where they fail to meet the desired properties. 


\keywords{Blockchains   \and Sharding \and Scalability \and Formalization.}
\end{abstract}
%
%
%
\section{Introduction}\label{sec:intro}
A promising solution to scaling blockchain protocols is sharding, \eg~\cite{zamani2018rapidchain,luu2016secure,wang2019monoxide,kokoris2017omniledger}. Its high-level idea is to employ multiple blockchains in parallel, the \textit{shards}, that operate using the same consensus protocol. Different sets of participants run consensus and validate transactions, so that the system ``scales out''.

However, there is no formal definition of a robust sharded ledger (similar to the definition of what a robust transaction ledger is~\cite{garay2015bitcoin}), which leads to multiple problems.
First,  each protocol defines its own set of goals, which tend to favor the protocol design presented.  These goals are then confirmed achievable by experimental evaluations that demonstrate their improvements. 
Additionally, due to the lack of robust comparisons (which cannot cover all possible Byzantine behaviors), sharding is often criticized as some believe that the overhead of transactions between shards cancels out the potential benefits.
In order to fundamentally understand sharding, one must formally define what sharding really is, and then see whether different sharding techniques live up to their promise.

\vspace{-11pt}
\subsubsection*{Related work.}
Recently, a few systemizations of knowledge on sharding~\cite{wang2019sharding}, consensus~\cite{bano2017consensus}, and cross-shard communication~\cite{zamyatin2019sok} which have also discussed part of sharding, have emerged. These works, however, do not define sharding in a formal fashion to enable an ``apples-to-apples'' comparison of existing works  nor do they explore its limitations.

There are very few works that lay formal foundations for blockchain protocols. In particular, the Bitcoin backbone protocol~\cite{garay2015bitcoin} was the first to formally define and prove a blockchain protocol, specifically Bitcoin, in a PoW setting. Later, Pass et al.~\cite{pass2017analysis} showed that there is no PoW protocol that can be robust under asynchrony. 
With Ouroboros~\cite{kiayias2017ouroboros} Kiayias et al.\ extended the ideas of backbone to the Proof-of-Stake (PoS) setting, where they showed that it is possible to have a robust transaction ledger in a semi-synchronous environment as well~\cite{david2018ouroboros}. 
However, all of these works consider only non-sharded ledgers but can be used as stepping stones to the formalization of sharded ledgers.

\vspace{-11pt}
\subsubsection*{Our contribution.}
In this work, we take up the challenge of providing formal ``common grounds'' under which we can capture the sharding limitations, determine the necessary components of a sharding system, and fairly compare different sharding solutions.
We achieve this by defining a formal sharding framework as well as formal bounds of what a sharded transaction ledger can achieve.

To maintain compatibility with the existing models of a robust  transaction ledger, we build upon the work of Garay et al.~\cite{garay2015bitcoin}. We generalize the transaction ledger properties, originally introduced in \cite{garay2015bitcoin}, namely
\textit{Persistence} and \textit{Liveness}, to also apply to sharded ledgers. Persistence expresses the agreement between honest parties on the transaction order, while liveness encompasses that a transaction will eventually be processed and included in the transaction ledger.
Further, we extend the model to capture what sharding offers to blockchain systems by defining \textit{Consistency} and \textit{Scalability}. Consistency is a security property that conveys the atomic property of \textit{cross-shard transactions} (transactions that span multiple shards and should either abort or commit in all shards). 
Scalability, on the other hand, is a performance property that encapsulates the resource gains per party 
(in bandwidth, storage, and computation) 
in a sharded system compared to a non-sharded system.

Once we define the properties, we explore the limitations of sharding protocols that satisfy them.
We identify a  \emph{trade-off between the bandwidth requirements and how adaptive the adversary is}, \ie, how ``quickly'' the adversary can change the corrupted parties.
Specifically, with a fully adaptive adversary, scalable and secure sharding is impossible in our model. 
With a slowly-adaptive adversary, however, {sharding can scale securely} with up to $m$ shards, where $n=c' m\log m$. The constant $c'$ encompasses the trade-off between scalability and security: if the overall and per-shard adversarial thresholds are close to each other, then $c'$ must be large to  ensure security within each hard.  
Furthermore, scaling against a somewhat adaptive adversary is only possible under two  conditions: first, the parties of a shard \emph{cannot be light clients} to other shards to scale storage. Second,  shards must periodically \emph{compact the state} updates in a verifiable and succinct manner (\eg, via checkpoints~\cite{kokoris2017omniledger}, cryptographic accumulators~\cite{boneh2019batching}, zero-knowledge proofs~\cite{zkproofs2020,meckler2018coda} or other techniques~\cite{kiayias2020non,bunz2019flyclient,assimakis2019proof,kadhe2019sef}); else eventually the bandwidth resources per party will exceed those of a non-sharded blockchain.

Once we  provide solid bounds on the design of sharding protocols, we identify seven  components that are critical to designing a robust permissionless sharded ledger:
(a)  a core consensus protocol for each shard,
(b) a protocol to partition transactions in shards, 
(c) an atomic cross-shard communication protocol that enables transferring of value across shards,
(d) a Sybil-resistance mechanism that forces the adversary to commit resources in order to participate, 
(e) a process that guarantees honest and adversarial nodes  are appropriately dispersed to the shards to defend security against adversarial adaptivity,
(f) a distributed randomness generation protocol, 
(g) a process to occasionally compact the state in a verifiable manner.
We then employ these components to introduce a protocol abstraction, termed \textit{Divide \& Scale}, that achieves robust sharding in our model. We explain the design rationale, provide security proofs, and identify which components affect the  scalability and throughput of our protocol abstraction.

To demonstrate the power of our framework, we further describe, abstract, and analyze the most well-established
permisionless sharding protocols:
\el~\cite{luu2016secure} (inspiration of \href{https://zilliqa.com/}{Zilliqa}),  \ol~\cite{kokoris2017omniledger} (inspiration of \href{https://harmony.one/}{Harmony}), \mx~\cite{wang2019monoxide}, and \rc~\cite{zamani2018rapidchain}. 
We demonstrate that all sharding systems fail to meet the desired properties in our model. \el and \mx do not actually (asymptotically) improve on storage over non-sharded blockchains according to our model. \ol is susceptible to a liveness attack where the adaptive adversary can simply delete a shard's state effectively preventing the system's progress. Albeit, with a simple fix, \ol satisfies all the desired properties in our model. Last, we prove \rc meets the desired properties but only in a weaker adversarial model.
For all protocols, we provide elaborate proofs while for \ol and \rc we further estimate how much they improve over their blockchain substrate. 
To that end, we define and use a theoretical performance metric, termed \textit{throughput factor}, which expresses the average number of transactions that can be processed per round under the worst possible Byzantine behavior.
We show that both \ol and \rc scale optimally with $m$ shards where $n=O(m \log m)$.

\paragraph{}In summary, the contribution of this work is the following: 
\begin{itemize}
    \item We introduce a framework where sharded transaction ledgers are formalized and the necessary properties of sharding protocols are defined. Further, we define a throughput factor to estimate the transaction throughput improvement of sharding blockchains over non-sharded blockchains (Section \ref{sec:model}).
    \item We explore the limitations of secure and efficient sharding protocols under our model (Section \ref{sec:limit}, Appendix~\ref{app:limit}).
    \item We identify the critical and sufficient ingredients  for designing a robust sharded ledger, which we incorporate into a protocol abstraction for robust sharding, termed Divide \& Scale (Section \ref{sec:roadmap}, Appendix~\ref{app:analysis}).
    \item We evaluate \el, \mx, \ol, and \rc. We pinpoint where the former three fail to satisfy our properties, whereas the latter satisfies them all only under a weaker adversarial model  (Section~\ref{app:comaprison}, Appendix~\ref{sec:eval}).
\end{itemize}


\section{The sharding framework}\label{sec:model}
In this section,
we define the desired security and performance properties
of a secure and efficient distributed sharded ledger, extending the work of  Garay et al.~\cite{garay2015bitcoin}.
We further define a theoretical performance metric, the transaction throughput.
To assist the reader, we provide a glossary of the most frequently used parameters in Table~\ref{tab:glossary} (Section Figures \ref{sec:figures}).

\subsection{The Model}

\paragraph{Network model.}
We analyze blockchain protocols assuming a synchronous communication network. In particular, a protocol proceeds in \textit{rounds}, and at the end of each round the participants of the protocol are able to synchronize, and all messages are delivered. 
A set of $R$ consecutive rounds $E=\{r_1,r_2,\dots,r_R \}$ defines an \textit{epoch}.
We consider a fixed number of participants in the system denoted by $n$. However, this number might not be known to the parties.
\vspace{-5pt}
\paragraph{Threat model.}
The adversary is slowly-adaptive, meaning that the adversary can corrupt parties on the fly at the beginning of each epoch but cannot change the corrupted set during the epoch, \ie, the adversary is static during each epoch. In addition, in any round, the adversary decides its strategy after receiving all honest parties' messages. The adversary can change the order of the honest parties' messages but cannot modify or drop them.
Furthermore, the adversary is computationally bounded and can corrupt at most $f$ parties during each epoch. This bound $f$ holds strictly at every round of the protocol execution.
Note that depending on the specifications of each protocol, \ie, which Sybil-attack-resistant mechanism is employed, the value $f$ represents a different manifestation of the adversary's power (\eg, computational power, stake in the system).
\vspace{-5pt}
\paragraph{Transaction model.}\label{subsubsec:txmodel}
We assume transactions consist of inputs and outputs that can only be spent as a whole. Each transaction input is an unspent transaction output (UTXO). Thus, a transaction takes UTXOs as inputs, destroys them and creates new UTXOs, the outputs. A transaction ledger that handles such transactions is UTXO-based, similarly to Bitcoin \cite{nakamoto2008bitcoin}. Most  protocols considered in this work are  UTXO-based. 
Transactions can have multiple inputs and outputs. 
We define the \textit{average size of a transaction}, i.e., the average number of inputs and outputs of a transaction in a transaction set, as a parameter $v$. This way $v$ correlates to the number of shards a transaction is expected to affect; the actual size in bytes is proportional to $v$ but unimportant for measuring scalability.
Further, we assume a transaction set $T$ follows a distribution $D_T$ (e.g.\ $D_T$ is the uniform distribution if the sender(s) and receiver(s) of each transaction are chosen uniformly at random from all possible users).

\subsection{Sharded Transaction Ledgers}
\label{subsec:ledger}
In this section, we introduce the necessary properties a sharding blockchain protocol must satisfy in order to maintain a robust sharded transaction ledger. We build upon the definition of a robust transaction ledger introduced in~\cite{garay2015bitcoin}.

A sharded transaction ledger is defined with respect to a set of valid\footnote{Validity depends on the application using the ledger.} transactions $T$ and a collection of  transaction ledgers for each shard $S=\{S_1, S_2, \dots, S_m \}$. 
In each shard $i\in [m]=\{1,2, \dots,m \}$, a transaction ledger is defined with respect to a set of valid ledgers\footnote{Only one of the ledgers is actually committed as part of the shard's ledger, but before commitment there are multiple potential ledgers.} $S_i$ and a set of valid transactions. Each set possesses an efficient membership test. 
A ledger $L \in S_i$ is a vector of sequences of transactions $L = \langle x_1, x_2, \dots, x_l\rangle $, where $tx \in x_j \Rightarrow tx\in T, \forall j\in[l]$. 


In a sharding blockchain protocol, a sequence of transactions $x_i=tx_1\dots tx_e$ is inserted in a block which is appended to a party's local chain $C$ in a shard.
A chain $C$ of length $l$ contains the ledger $L_C=\langle x_1, x_2, \dots, x_l\rangle$ if the input of the $j$-th block in $C$ is $x_j$. 
The \textit{position} of transaction $tx_j$ in the ledger of a shard $L_C$ is the pair $(i, j)$ where $x_i = tx_1\dots tx_j \dots tx_e$ (\ie, the block that contains the transaction). Essentially, a party \textit{reports} a transaction $tx_j$ in position $i$ only if one of their shards' local ledger  includes transaction $tx_j$ in the $i$-th block.
We assume that a block has constant size, \ie, there is a maximum constant number of transactions included in each block\footnote{To scale in bandwidth, the block size cannot depend on the parties or transactions.}.

Furthermore, we define a symmetric relation on $T$, denoted by $M(\cdot , \cdot)$, that indicates if two transactions are conflicting, \ie, $M(tx,tx')=1 \Leftrightarrow tx, tx'$ are conflicting. 
Note that valid ledgers can never contain conflicting transactions. Similarly, a valid sharded ledger cannot contain two conflicting transactions even across shards. 
In our model, we assume there exists a verification oracle denoted by $V(T, S)$, which instantly verifies the validity of a transaction with respect to a ledger. In essence, the oracle $V$ takes as input a transaction $tx \in T$ and a valid ledger $L= \langle x_1, x_2, \dots, x_l\rangle \in S$ and checks whether the transaction is valid and not conflicting in this ledger; formally, $V(tx, L)=1 \Leftrightarrow \exists tx' \in L $ s.t. $M(tx,tx')=1$ or $L' = \langle x_1, x_2, \dots, x_l, tx \rangle$ is an invalid ledger.

Next, we introduce the security and performance properties a blockchain protocol must uphold to maintain a robust and efficient sharded transaction ledger: 
persistence, consistency, liveness, and  scalability.
Intuitively, \textit{persistence} expresses the agreement between honest parties on the transaction order, whereas \textit{consistency} conveys that cross-shard transactions are either committed or aborted atomically (in all shards).
\textit{Liveness} indicates that transactions will eventually be included in a shard, \ie, the system makes progress. Last, \textit{scalability} encapsulates the speedup of a sharded system in comparison to a non-sharded system: The blockchain's throughput limitation stems from the need for data propagation, maintenance, and verification by every party. Thus, to scale via sharding, each party must broadcast, maintain and verify mainly local information. 

\begin{defn}[Persistence]
\label{def:persistence}
Parameterized by $k \in \mathbb{N}$ (``depth'' parameter), if in a certain round an honest party reports a shard that contains a transaction $tx$ in a block at least $k$ blocks away from the end of the shard's ledger (such transaction will be called ``stable''), then whenever $tx$ is reported by any honest party it will be in the same position in the shard's ledger.
\end{defn}

\begin{defn}[Consistency]
\label{def:atomic}
Parametrized by $k \in \mathbb{N}$ (``depth'' parameter), 
there is no round $r$ in which there are two honest parties $P_1, P_2$ reporting transactions $tx_1, tx_2$ respectively as stable (at least in depth $k$ in the respective shards), such that $M(tx_1,tx_2)=1$.
\end{defn}
Both persistence and consistency are necessary properties because one may fail while the other holds. For instance, if a party double-spends across two shards without reverting a stable transaction (e.g., due to a badly designed mechanism to process cross-shard transactions), consistency fails while persistence holds.

We further note consistency depends on the average size of transactions $v \in \mathbb{N}$ as well as the distribution of the input set of transactions $D_T$.  
For example, if all transactions are intra-shard, consistency is trivially satisfied due to persistence. 

To evaluate the system's progress, we assume that the block size is sufficiently large, thus a transaction will never be excluded due to space limitations.
\begin{defn}[Liveness]
\label{def:liveness}
Parameterized by $u$ (``wait time'') and $k$ (``depth'' parameter), 
provided that a valid transaction is given as input to all honest parties of a shard continuously for the creation of $u$ consecutive rounds, then all honest parties will report this transaction at least $k$ blocks from the end of the shard, \ie, all report it as stable.
\end{defn}

Scaling distributed ledgers depends on three vectors: \textit{communication}, \textit{space}, and \textit{computation}. 
In particular, to allow high transaction throughput, the
bandwidth and computation required per party should ideally be constant and independent of the number of parties while the storage requirements per party should decrease with the number of parties. Such a system can scale optimally because an increased transaction load, e.g.\ double, can be processed with the same storage resources if the parties increase proportionally, e.g.\ double, as well as the same communication and computation resources per node. 
To measure scalability, i.e., the resource requirements per node, we define three scaling factors, namely the {communication, space, and computation factor}.

We define the \textit{\bf{communication factor $\omega_m$}} as the communication complexity of the system (per transaction) scaled over the number of participants. In essence, $\omega_m$ represents the average amount of sent or received data  (bandwidth) required per party to include  a transaction in the ledger. $\omega_m$ expresses the worst communication complexity of all the subroutines of the system, incorporating
 the bandwidth requirements of the protocols both within an epoch (\ie, within and across shards communication), as well as during epoch transitions (amortized over the epoch's length). The latter becomes the bottleneck for scalability in the long run as rotating parties must bootstrap to new shards and download the ever-growing shard ledgers.

%
We next introduce the \textit{\bf{space  factor $\omega_s$}} that estimates how much data each party stores in the system. To do so, we count the amount of data stored in total by all the parties scaled over the number of parties and the transaction load.
When $\omega_s$ is constant, $\Theta(1)$, each node stores all transactions equivalently to a central database, \eg, Bitcoin. On the contrary, a perfectly scalable system allows parties to share the  transaction load equally, $\omega_s = c/n$, $c$ constant; as a result, if parties increase proportionally to the transaction load the space resources per party remain the same.

To define the space factor we introduce the notion of average-case analysis. Typically, sharding protocols scale well when the analysis is optimistic, that is, for transaction inputs that  contain neither cross-shard nor multi-input (multi-output) transactions. However, in practice transactions are both cross-shard and multi-input/output. For this reason, we define the space factor as a random variable dependent on an input set of transactions $T$ drawn uniformly at random from a distribution $D_T$.

We assume $T$ is given well in advance as input to all parties. To be specific, we assume every transaction $tx \in T$ is given at least for $u$ consecutive rounds to all parties of the system. Hence, from the liveness property, all transaction ledgers held by honest parties will report all transactions in $T$ as stable.
Further, we denote by $L^{\lceil k}$ the vector $L$ where the last $k$ positions are ``pruned'', while $|L^{\lceil k}|$ denotes the number of transactions contained in this ``pruned'' ledger. We note that a similar notation holds for a chain $C$ where the last $k$ positions map to the last $k$ blocks.
Each party $P_j$ maintains a collection of ledgers $SL_j=\{L_1, L_2, \dots, L_s \}, 1 \leq s \leq m$. 
We may now define the \textit{space factor} for a sharding protocol with input $T$ as the number of stable transactions included in every party's collection of transaction ledgers over the number of parties $n$ and the number of input  transactions $T$\footnote{Without loss of generality, we assume all transactions are valid and thus are eventually included in all honest parties' ledgers.},
$
\omega_s(T)= \sum_{\forall j \in [n]} \sum_{\forall L \in SL_j} |L^{\lceil k}| / (n|T|)$.

Lastly, we consider the verification process which can be computationally expensive. In our model, we focus on the average verification cost per transaction. We assume a constant  computational cost per verification, \ie, a party's running time of verifying if a transaction is invalid or conflicting with a ledger is considered constant because this process can always speed up using efficient data structures (\eg\ trees allow for logarithmic lookup time). Thus, the computational cost of a party is defined by the number of times the party executes the verification process. 
For this purpose, we employ a verification oracle $V$.
Each party calls the oracle to verify transactions, pending or included in a block. We denote by $q_i$ the number of times party $P_i$ calls oracle $V$ in a protocol execution.
The \textit{\bf{computational factor $\omega_c$}} reflects the total number of times all parties call the verification oracle in a protocol execution scaled over the number of transactions $T$, 
{$\omega_c (T) =  \sum_{\forall i \in [n]} q_i / |T|$}.

An ideal sharding system only involves a constant number of parties to verify each transaction, $\omega_c = \Theta(1)$, while both a typical BFT-based protocol and Bitcoin demand all nodes to verify all transactions,  $\omega_c = \Theta(n)$.
Furthermore, the computational  factor is a random variable, hence the objective is to calculate the expected value of $\omega_c$, \ie, the probability-weighted average of all possible values, where the probability is taken over the input transactions $T$.

Intuitively, scaling means processing more transactions with similar (\ie, not proportionally increasing) resources per party. 
If parties share the transaction load, \eg, space scales $\omega_s=c/n$, increased transactions can be processed by increasing the number of parties. Subsequently, the communication and computational costs must not increase proportionally to the number of parties, \ie,  $\omega_c=o(n)$ and $\omega_m=o(n)$, else the system cannot truly scale the transaction load. 
We observe, however, that in practice  protocols may scale well in one dimension but fail in another. A notable example is the Bitcoin protocol which has minimal communication overhead but does not scale in space and computation. 
To ensure overall scaling capabilities, we define the scalability property of sharded ledgers below; we say that a sharded ledger satisfies scalability if and only if the system scales in all the aforementioned dimensions. 
 
\begin{defn}[Scalability]
\label{def:scalability}
Parameterized by $n$ (number of participants), $v \in \mathbb{N}$ (average size of transactions), $D_T$ (distribution of the input set of transactions),
the communication, space and computational factors of a sharding blockchain protocol are $\omega_m=o(n)$, $\omega_s=o(1)$, and  $\omega_c=o(n)$, respectively. 
\end{defn}

In order to adhere to standard security proofs from now on we say that the protocol $\Pi$ \textit{satisfies} property $Q$ in our model if $Q$ holds with overwhelming probability (in a security parameter).
Note that a probability $p$ is overwhelming if $1-p$ is negligible. A function $negl(k)$ is negligible if for every $c>0$, there exists an $N>0$ such that $negl(k)<1/k^c$ for all $k>\geq N$. 
Furthermore, we denote by $\mathbb{E}(\cdot)$ the expected value of a random variable.


\begin{defn}[Robust Sharded Transaction Ledger]\label{def:sharding}
A protocol that satisfies the properties of persistence, consistency, liveness, and scalability maintains a robust sharded transaction ledger.
\end{defn}

\subsection{(Sharding) Blockchain Protocols} \label{subsec:cryptoprop}
In this section, we adopt the definitions and properties of~\cite{garay2015bitcoin} for blockchain protocols, while we slightly change the notation to fit our model. 
In particular, we assume the parties of a shard of any sharding protocol maintain a chain (ledger) to achieve consensus. This means that every shard internally executes a blockchain (consensus) protocol that has three properties as defined by \cite{garay2015bitcoin}: chain growth, chain quality, and common prefix. 
Each consensus protocol satisfies these properties with different parameters. 

In this work, we will use the properties of the shards' consensus protocol to prove that a sharding protocol maintains a robust sharded transaction ledger. In addition, we will specifically use the shard growth and shard quality parameters to estimate the transaction throughput of a sharding protocol.
The following definitions follow closely Definitions $3, 4$ and $5$ of \cite{garay2015bitcoin}.

\begin{defn}[Shard Growth Property]
\label{def:growth}
Parametrized by $\tau \in \mathbb{R}$ and $s \in \mathbb{N}$, for any honest party $P$ with chain $C$, it holds that for any $s$ rounds there are at least $\tau \cdot s$ blocks added to chain $C$ of $P$.
\end{defn}

\begin{defn}[Shard Quality Property]
\label{def:quality}
Parametrized by $\mu \in \mathbb{R}$ and $l\in \mathbb{N}$,  for any honest party $P$ with chain $C$, it holds that for any $l$ consecutive blocks of $C$ the ratio of honest blocks in $C$ is at least $\mu$.
\end{defn}

\begin{defn}[Common Prefix Property]
\label{def:commonprefix}
Parametrized by $k\in \mathbb{N}$, for any pair of honest parties $P_1, P_2$ adopting chains $C_1,C_2$ (in the same shard) at rounds $r_1 \leq r_2$  respectively, it holds that $C_1^{\lceil k}\preceq C_2$, where $\preceq$ denotes the prefix relation.
\end{defn}

Next, we define the \textit{\bf degree of parallelism} (DoP) of a sharding protocol, denoted $m'$. 
To evaluate the DoP of a protocol with input $T$, we need to determine how many shards are affected by each transaction on average; essentially, estimate how many times we run consensus for each valid transaction until it is stable. This is determined by the mechanism that handles the cross-shard transactions. 
To that end, we define $m_{i,j}=1$ if the $j$-th transaction of set $T$ has either an input or an output that is assigned to the $i$-th shard; otherwise $m_{i,j}=0$. Then, the DoP of a protocol execution over a set of transactions $T$ is defined as follows:
$m' = \frac{T \cdot m}{\sum_{j=1}^T \sum_{i=1}^m m_{i,j}}$.
The DoP of a protocol execution depends on the distribution of transactions $D_T$, the average size of transactions $v$, and the number of shards $m$.
For instance, assuming a uniform distribution $D_T$, the expected DoP is $\mathbb{E}(m')=m/v$.

We can now define an efficiency metric, the \textit{transaction throughput} of a sharding protocol. Considering constant block size, we have:
\begin{defn}[Throughput]
\label{def:throughput}
The expected transaction throughput in $s$ rounds of a sharding protocol with $m$ shards  is $\mu \cdot \tau \cdot s \cdot m'$.
We define the throughput factor of a sharding protocol $\sigma=\mu \cdot \tau \cdot m'$.
\end{defn}
Intuitively, the throughput factor expresses the average number of blocks that can be processed per round by a sharding protocol.
Thus, the transaction throughput (per round) can be determined by the block size multiplied by the throughput factor. 
The block size is considered constant; however, it cannot be arbitrarily large. The limit on the block size is determined by the bandwidth of the ``slowest'' party within each shard.
At the same time, the constant block size guarantees low latency. If the block size is very large or depends on the number of shards or the number of participants, \textit{bandwidth or latency} becomes the performance bottleneck. 
As our goal is to estimate the efficiency of the transactions' parallelism in a protocol, other factors like cross-shard communication latency are omitted.

\section{Limitations of sharding protocols}\label{sec:limit}

In this section, we present a summary of our analysis  on the limitations of sharding protocols in our framework (cf.\ Appendix~\ref{app:limit}). 

First, we focus on the limitations that stem from the nature of the transaction workload. In particular, sharding protocols are affected by two characteristics of the input transaction set: the transaction size $v$ (number of inputs and outputs of each transaction), and more importantly the number of cross-shard transactions.

The average size of transactions is fairly small in practice, e.g., an average Bitcoin transaction has $2$ inputs and $3$ outputs with a small deviation~\cite{bitcoinvisuals}. We thus assume a fixed number of UTXOs participating in each transaction, meaning the transaction size $v$ is a small constant. 
Furthermore, as $v$ increases, more shards are affected by each transaction on expectation, hence the number of cross-shard transactions increases. To meaningfully lower bound the ratio of cross-shard transactions, we thus consider the minimum transaction size $v=2$. If a transaction has more UTXOs, its chance of being cross-shard only increases. 

The number of cross-shard transactions depends on the distribution of the input transactions $D_T$, as well as the process that partitions transactions into shards.
First, we assume each ledger interacts (\ie, shares a cross-shard transaction) with $\gamma$ other ledgers on average, $\gamma$ being a function dependent on the number of shards $m$.
We examine protocols where parties maintain information on shards other than their own and derive an upper bound for the expected value of $\gamma$ such that scalability holds.
Leveraging that, we prove the following:

\begin{restatable}{thrm}{adaptive}\label{thm:adaptive}
There is no protocol maintaining a robust sharded transaction ledger against an adaptive adversary in our model controlling $f \geq n/m$, where $m$ is the number of shards, and $n$ is the number of parties.
\end{restatable}

Next, we extend our results assuming, similarly to most sharding systems, that the UTXO space is partitioned uniformly at random into shards. 
In particular, we first show that a constant fraction of transactions is expected to be cross-shard. Using that we demonstrate there is no sharded ledger that satisfies scalability if parties store any information on ledgers (other than their own) involved in cross-shard transactions, \ie, are light clients on other shards~\cite{chatzigiannis2022sok}. 
We stress that our results hold for \textit{any distribution} where the expected number of cross-shard transactions is \textit{proportional} to the number of shards.

\begin{restatable}{thrm}{light}\label{thm:scale-SPV}
There is no protocol that maintains a robust sharded transaction ledger in our model under uniform space partition  when parties are light nodes on the shards involved in cross-shard transactions.
\end{restatable}

We further identify a concrete trade-off between security and scalability, that stems from the way parties are partitioned into shards. In particular, when parties are randomly permuted among shards, which is a common practice in sharding, \eg,~\cite{kokoris2017omniledger,luu2016secure}, sharding scales almost linearly. The trade-off is now captured by the constant $c'$:  if the overall and per-shard adversarial thresholds are close to each other, then $c'$ must be large to  ensure security within each shard.


\begin{restatable}{thrm}{scale}
Any protocol that maintains a robust sharded transaction ledger in our model under uniformly random partition of the state and parties, can scale at most by a factor of $m$, where $n=c'm\log m$ and the constant $c'$ encompasses the trade-off between security and scalability.
\end{restatable}

Finally, we demonstrate the importance of periodical compaction of the valid state-updates in sharding protocols: we prove that any sharding protocol that satisfies scalability in our model, when the state is uniformly partitioned and the parties are periodically shuffled among shards, requires a state-compaction process such as checkpoints~\cite{kokoris2017omniledger}, cryptographic accumulators~\cite{boneh2019batching}, zero-knowledge proofs~\cite{zkproofs2020}, non-interactive proofs of proofs-of-work~\cite{kiayias2020non,bunz2019flyclient}, proof of necessary work~\cite{assimakis2019proof}, erasure codes~\cite{kadhe2019sef}, etc.
Intuitively, parties must be periodically shuffled among shards to maintain security against adaptivity. Subsequently, the parties must occasionally bootstrap to the new ever-increasing blockchains, leading to bandwidth or storage overheads that exceed those of a non-sharded blockchain in the long run.
We stress that this result holds even if the parties are not randomly shuffled among the shards, as long as a significant fraction of parties changes shards from epoch to epoch.

\begin{restatable}{thrm}{checkpoints}\label{thm:checkpoints}
Any protocol that maintains  a robust sharded transaction ledger in our model, under uniformly random partition of the state and parties, employs verifiable compaction of the state.
\end{restatable}


\section{Divide \& Scale} \label{sec:roadmap}
In this section, we discuss our design rationale for robust sharding; using the bounds of Section \ref{sec:limit}, we deduce some sufficient components for robust sharding in our model. We leverage these components to introduce a \textit{protocol abstraction} for robust sharding, termed \textit{Divide \& Scale}, in Algorithm~\ref{alg:shard-crux}. We prove Divide \& Scale is secure in our model (assuming the components are secure) and evaluate its efficiency depending on the choices of the individual components in Appendix~\ref{app:analysis}.

\vspace{-11pt}
\subsubsection*{Sharding Components.}
 We explain our design rationale and introduce the ingredients of a  protocol that maintains a robust sharded ledger.
\vspace{-4pt}
\begin{enumerate}[wide,label=(\alph*), labelindent=2pt, itemsep=3pt
]

    \item \textbf{Consensus protocol of shards or \texttt{Consensus}:} 
    A sharding protocol either runs consensus in every shard separately (multi-consensus) or provides a single total ordering for all the blocks generated in each shard (uni-consensus~\cite{al2019lazyledger,rana2020free2shard}). Since uni-consensus takes polynomial cost per block, such a protocol can only scale if the block size is also polynomial (e.g., includes  $\Omega(n)$ transactions~\cite{rana2020free2shard}). However, in such a case, the resources of each node generating an $\Omega(n)$-sized block must also grow with $n$, and therefore scalability  cannot be satisfied\footnote{Due to their inherent inability to asymptotically scale, we believe uni-consensus systems are categorized as performance optimizations of consensus, \eg, \cite{androulaki18hyperledger,stathakopoulou2019mir,avarikioti2020fnfbft,danezis2022narwhal,spiegelman2022bullshark}.}. 
    For this reason, in our protocol abstraction, we chose the multi-consensus approach.
    
    The consensus protocol run per shard must satisfy the properties of Garay et al.~\cite{garay2015bitcoin}: \textit{common prefix, chain quality, and chain growth}. These properties are necessary (but not sufficient) to ensure persistence, liveness, and consistency.
    
    \item \textbf{Cross-shard mechanism or \texttt{CrossShard}:} 
    The cross-shard mechanism is the protocol that handles the transactions that span across multiple shards.
    It is critical for the security of the sharding system, as it guarantees consistency, as well as scalability; a naively designed cross-shard mechanism may induce high storage or communication overhead on the nodes when handling several cross-shard transactions. To that end, the limitations of Section \ref{sec:limit} apply.
    
    The cross-shard mechanism should provide the \textit{ACID properties} (as in database transactions). Durability and Isolation are provided directly by the blockchains of the shards, hence, the cross-shard mechanism should provide {Consistency}, \ie, every transaction that commits produces a semantically valid state, and {Atomicity}, \ie, transactions are committed and aborted atomically (all or nothing).
    Typically the cross-shard mechanism runs hand in hand with the consensus protocol to guarantee consistency across shards.
    
    \item \textbf{Sybil-resistance mechanism or \texttt{Sybil}:} 
    The Sybil-resistance mechanism enables the participants of a permissionless setting to reach a global consensus on a set of fairly-selected valid identities. Its fair selection, \ie, assigning valid identities to each party proportionally to its spent resources, guarantees the security bounds of the consensus protocol (\eg, $f<1/3$ for BFT).
    To ensure fairness against slowly-adaptive adversaries, the Sybil-resistance mechanism must have access to unknown unbiasable randomness (see below DRG).
    The exact protocol (\eg\ PoW, PoS) is irrelevant to our analysis as long as it guarantees 
    (i)~\textit{correctness}: all parties can verify a valid identity, 
    (ii)~\textit{fairness}:  each party is selected with probability proportional to its resources,
    and (iii)~\textit{unpredictability}: no party can predict beforehand the valid set of identities (for the new epoch).

    \item \texttt{\textbf{StatePartition:}} This protocol determines how the state (\eg\ transactions) is partitioned into shards. A naive design may violate consistency but there are several secure solutions to employ, \eg~\cite{kokoris2017omniledger,zamani2018rapidchain}. 
    We perform our analysis assuming all transactions are cross-shard, because any secure protocol that performs well in the pessimistic case, also performs well  when transactions are intra-shard. Moreover, in the latter case, scaling is not challenging as the transaction throughput can be processed securely in blockchains that work in parallel.  

    \item \textbf{Division of nodes to shards or \texttt{Divide2Shards}:} 
    This is the protocol that determines how parties are assigned to shards.
    It is crucial for security against slowly adaptive adversaries as a fully corrupted shard may result in the loss of all three security properties.
    It is also the reason that sharding cannot tolerate fully-adaptive adversaries in our model (Theorem~\ref{thm:adaptive}). Note that static adversaries are an easier subcase of the slowly adaptive one.

    In particular, to ensure transaction finality (\ie, liveness and persistence), either the consensus security bounds must hold for each shard, or the protocol must guarantee that if the adversary compromises a shard then the security violation will be restored within a specific (small) number of rounds. Specifically, if an adversary completely or partially compromises a shard, effectively violating the consensus bounds, then the adversary can double spend within the shard (violates persistence), as well as across shards (because nodes cannot verify cross-shard transactions from Lemma~\ref{lem:cross-bound}).
    Therefore, the transactions included in these blocks can only be executed when honest parties have verified them. Partial solutions towards this direction have been proposed such as proofs of fraud that allow an honest party to later prove misbehavior. Another challenge of this approach is to guarantee data availability.  
    
    Due to the complexity of such solutions and their implications on the transactions' finality, we design Divide \& Scale assuming  \textit{the security bounds of consensus are maintained} when parties are divided into shards.
    Specifically,  
    the parties are shuffled at the beginning of each epoch so that the threat model holds.
    A secure shuffling process requires an ubiasable source of randomness (see below DRG). 
    When assigned to a shard, the nodes update their local state with the state of the new shard they are asked to secure, which in turn affects scalability. 
    \textit{The frequency of shuffling is thereby incorporating the trade-off between scalability and adaptive security.}

    

    \item \textbf{Randomness generation protocol or \texttt{DRG}:} 
    The DRG protocol \textit{provides unpredictable unbiasable randomness}~\cite{feldman1987practical,cascudo2017scrape,schindler2018hydrand,syta2017scalable,kokoris2020asynchronous,bonneau2015bitcoin,bunz2017proofs,das2022practical} such that both \texttt{Sybil} and \texttt{Divide2Shards} result in shards that maintain the security bounds for the consensus protocol.
    Given a slowly-adaptive adversary, the DRG protocol must be executed (at least) once per epoch; its high communication complexity can be amortized over the rounds of an epoch such that the system scales.
    
    \item \textbf{Verifiable compaction of state or \texttt{CompactState}:}
    \texttt{CompactState} guarantees that periodically state updates can be  verifiably compacted. This protocol is necessary for scaling sharding systems in the long run, as it ensures that new parties can bootstrap with minimal effort (Theorem \ref{thm:checkpoints}).   
    The compacted state must be broadcasted to all parties, \eg\ via reliable broadcast~\cite{bracha1985asynchronous}, to ensure data integrity and data availability; else a slowly-adaptive adversary can corrupt an entire shard after an epoch transition, violating liveness.
    Any protocol that ensures data binding and data availability can be used.
    In summary, this protocol must guarantee (i)~\textit{verifiable asymptotic compression} (more than constant), and (ii)~\textit{data integrity and availability}, \ie, the ledgers' history is available and can be retrieved.
    To satisfy scalability, the protocol must also ensure (iii)~\textit{efficient communication complexity} with respect to the epoch size (in rounds).

\end{enumerate}

\begin{algorithm}[t!] 
\SetAlgorithmName{Protocol Abstraction}
\SetStartEndCondition{ }{}{}
\SetKwProg{Fn}{def}{\string:}{}
\SetKw{KwTo}{to}\SetKwFor{For}{for}{\string:}{}
\SetKwIF{If}{ElseIf}{Else}{if}{:}{elif}{else:}{}
\AlgoDontDisplayBlockMarkers\SetAlgoNoEnd\SetAlgoNoLine
\let\oldnl\nl
\newcommand{\nonl}{\renewcommand{\nl}{\let\nl\oldnl}}
\small{
 \KwData{$N_0$ nodes are participating in the system at round $0$ (genesis block). $m(N_E)$ denotes the function that determines the number of shards in epoch $E$. The transactions of epoch $E$ are $T_E$. $i$ denotes the block round (its relation to the communication rounds depends on the employed components).}
 \nonl
 \KwResult{Shard state $T=\{T_0,T_1,\dots \}$.}\
 
 \tcc{Initialization}
 $i \leftarrow 1$\\
 $E\leftarrow 0$\\
 
  \tcc{Beginning of epoch: retrieve identities from Sybil resistant protocol, execute the DRG protocol to create the new epoch randomness, and assign nodes to shards}
  \uIf{$i\mod R = 1$}{ 
    $E \leftarrow E+1$\\
    \uIf{$i \neq 1$}{ $N_{E} \leftarrow$ \texttt{Sybil($r_{E-1}$)}
    }
    $r_{E} \leftarrow$ \texttt{DRG($N_E$)}\\
    Call \texttt{Divide2Shards($N_E, m(N_E), r_E$)}
  }
  
    \tcc{End of epoch: compact the state of the shard}
    \uElseIf{$i\mod R= 0$}{
        Call \texttt{CompactState(i)}
    }
    
    \tcc{During epoch: run the consensus protocol for intra-shard and cross-shard transactions}
    \Else{
        \uIf{If transaction $t \in T_E$ is cross-shard}{
            Call \texttt{CrossShard(t)} \tcp*{Invokes \texttt{Consensus} in multiple shards}
        }
         \Else{
            Call \texttt{Consensus(t)}
        }
    }
    
$i \leftarrow i+1$\\
Go to step 3
}
 \caption{Divide \& Scale}
 \label{alg:shard-crux}
\end{algorithm}

\vspace{-8pt}

\section{Evaluation of sharding protocols}\label{app:comaprison} 
To showcase the wide applicability and value of our framework, we evaluate in our model the well-established sharding protocols \el, \mx, \ol, and \rc, and discuss \cs.
We refer the reader to Appendix~~\ref{sec:eval} for the complete analysis where we identify each protocol's sharding components as defined in Section~\ref{sec:roadmap} which we use to prove or disprove the desired properties of Section~\ref{sec:model}, often leveraging the bounds of Section~\ref{sec:limit}.
Due to space limitations, we only discuss here the final results of our analysis, also illustrated in  Table~\ref{tab:comparison},  with key insights on how each protocol fails to meet some of the properties.
%
%
We include in the evaluation the ``permissionless'' and ``slowly-adaptive'' properties to fairly compare the protocols. 
In our analysis, we evaluate the cross-shard communication protocols considering the fixes of~\cite{sonnino2019replay} against replay attacks.

\begin{table*}[ht]
\vspace{-0.3cm}
    \centering
 \caption{Summarizing sharding protocol properties under our model}
 \footnotesize{
\begin{tabular}{l|cccccc}
        \toprule
                     \text{Protocol}              & \text{Persistence} & \text{Consistency} & \text{Liveness} & \text{Scalability} & \text{Permissionless}  & \text{S.-adaptive}\\
        \midrule
                \text{\el}     &\cmark & \xmark & \cmark & \xmark & \cmark & \cmark\\

        \hline
                \text{\mx}     &\cmark & \cmark & \cmark & \xmark & \cmark & \cmark\\  
        \hline
                \text{\ol}         &\cmark & \cmark & \xmark & \cmark & \cmark  & \cmark \\ 
        \hline
                \text{\rc}      &\cmark & \cmark & \cmark & \cmark & \cmark & $\bm{\sim}$  \\ 
        \hline
                \text{\cs}     &\cmark & \cmark & \cmark & \cmark & \xmark & \xmark  \\ 
                
        \bottomrule
    \end{tabular}}
    \label{tab:comparison}
\end{table*}

We first show that \textbf{\el} does not satisfy \textit{consistency} in our model because the adversary may double-spend across shards when multi-input transactions are allowed (Theorem~\ref{thm:el-consistency}).
Additionally, \el does not satisfy \textit{scalability} by design regardless of the transaction distribution -- even with a few cross-shard transactions (Theorem~\ref{thm:el-scale}). Specifically,   
 all epoch-transition protocols are executed for every block while  parties maintain a global hash chain. Thus, transactions are only compressed  by a constant factor, the block size, resulting in space and communication growing proportionally to the number of parties. 

We then show that \textbf{\mx} does not satisfy \textit{scalability} because miners must mine in parallel in all shards, verifying and storing all transactions to ensure security  (Theorem~\ref{thm:mx-scale}). Due to its design rationale, \mx cannot scale even with optimistic transaction distributions with no cross-shard transactions.

Third, we prove that \textbf{\ol} satisfies all properties but \textit{liveness} (Theorems~\ref{omni:persistence},\ref{omni:consistency},\ref{omni:liveness}, \ref{omni:scalability}). Specifically, \ol  checkpoints the UTXO pool at each epoch transition, but the state is not broadcasted to the network. Hence, a slowly adaptive adversary can corrupt a shard from the previous epoch before the new nodes of the shard bootstrap to the state in epoch transition. This attack violates liveness but simply adding a reliable broadcast step after checkpointing restores the liveness since all other components satisfy it already. The overhead of reliable broadcast can be amortized over the rounds of the epoch hence the overall scalability is not affected.

 Fourth, we prove \textbf{\rc} maintains a robust sharded ledger but only under a \textit{weaker model} than the one defined in Section~\ref{sec:model} (Theorems~\ref{rc:persistence},\ref{rc:consistency},\ref{rc:liveness},\ref{rc:scalability}).
Specifically, the protocol only allows a constant number of parties to join or leave and the adversary can at most corrupt a constant number of additional parties with each epoch transition. 
Another shortcoming of \rc is the synchronous consensus mechanism it employs. 
In case of temporary loss of synchrony in the network, the consensus of cross-shard transactions is vulnerable, hence consistency might break~\cite{zamani2018rapidchain}.
However, most of these drawbacks can be addressed with simple solutions, such as changing the consensus protocol (trade-off performance with security), replacing the epoch transition process with one similar to (fixed) \ol, etc. 
Although \ol (with the proposed fix) maintains a robust sharded ledger in a stronger model (as defined in Section~\ref{sec:model}), \rc introduces practical speedups on specific components of the system. These improvements are not asymptotically important -- and thus not captured by our framework -- but might be significant for the performance of  deployed sharding protocols.

Finally, we include in the comparison \textbf{\cs}, which maintains a robust sharded transaction ledger but only in the permissioned setting against a static adversary. 
\cs could be secure in our model in the permissioned setting if it adopts \ol's epoch transition protocols and the proposed fix for data availability in the verifiable compaction of state.
We omit the security proofs for \cs since they are either included in \cite{al2018chainspace} or are similar to \ol.

\vspace{-11pt}
\subsubsection*{Discussion.}
 Although we restrict our evaluation to the most impactful (so far) sharding proposals, we stress that the power of our framework and the bounds we provide are not limited to these works. For instance, we observe that Chainweb~\cite{martino2018chainweb}, a recently deployed sharding proposal, does not scale  because it violates Theorem~\ref{thm:scale-SPV}.
 We believe our framework is general enough to cover most sharding approaches, and we aspire it will be established as a tool for proving the security of future sharding protocols.

\newpage
\section*{Figures}\label{sec:figures}

\begin{table}[h!]
    \centering
 \caption{(Glossary) The parameters in our analysis.}
\begin{tabular}{c|l}
        \toprule

        $n$ & number of parties\\
        $f$ & number of Byzantine parties\\
        $m$ & number of shards\\
        $v$ & average transaction size (number of inputs and outputs)\\
        $E$ & epoch, \ie, a set of consecutive rounds\\
        $T$ & set of transactions (input)\\
    $k$ & ``depth'' security parameter (persistence)\\
    $u$ & ``wait'' time (liveness)\\
	$\omega_m$ & communication  factor\\
	$\omega_s$ & space  factor\\
	$\omega_c$ & computational  factor\\
	$\sigma$ & throughput factor\\
	$\mu$ & chain quality parameter \\
	$\tau$ & chain growth parameter \\
	$v$ & average transaction size\\
	$m'$ & degree of parallelism\\
	$\gamma$ & average number of a shard's interacting shards (cross-shard)\\

        \bottomrule
    \end{tabular}
    \label{tab:glossary}
\end{table}

 \section*{Acknowledgments}
    The work was partially supported by the Austrian Science Fund (FWF) through the project CoRaF (grant agreement 2020388).
%
%
%

%

\newpage
\appendix
%

\section{Limitations of sharding protocols}\label{app:limit}

\subsection{General Bounds}\label{subsec:general}
First, we prove there is no robust sharded transaction ledger that has a constant number of shards. Then, we show that there is no protocol that maintains a robust sharded transaction ledger against an adaptive adversary.

\begin{restatable}{lem}{shards}\label{thm:shards}
In any robust sharded transaction ledger the number of shards \\
(parametrized by $n$) is $m=\omega(1)$.
\end{restatable}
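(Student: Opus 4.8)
The plan is to derive the bound directly from the scalability requirement. By Definition~\ref{def:scalability}, a robust sharded ledger has scaling factor $\Sigma=\omega(1)$, and since $\Sigma=n/\max(\omega_m,\omega_s,\omega_c)\leq n/\omega_s$, it suffices to prove a lower bound of the form $\omega_s=\Omega(n/m)$ on the space overhead: this immediately gives $\Sigma\leq m$, so $\Sigma=\omega(1)$ forces $m=\omega(1)$. The identical argument through $\omega_c$ works as well, since the quantity I bound below is simultaneously a lower bound on the computational overhead.

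To prove $\omega_s=\Omega(n/m)$ I would count stored transactions shard by shard. Let $n_i$ be the number of honest parties assigned to shard $i$ and $t_i=|L_i^{\lceil k}|$ the number of stable transactions in that shard. To run its consensus and uphold persistence and liveness inside the shard, every honest party of shard $i$ must store that shard's entire stable ledger, so each of the $t_i$ stable transactions is replicated $n_i$ times. Hence $\omega_s(T)\,|T|=\sum_{j\in[n]}\sum_{L\in SL_j}|L^{\lceil k}|\geq\sum_{i=1}^m n_i t_i$. Because every input transaction is eventually made stable in some shard (Definition~\ref{def:liveness}) we have $\sum_i t_i\geq|T|$, and because a secure assignment spreads the $n$ parties roughly evenly each shard carrying transactions has $n_i=\Omega(n/m)$; therefore $\sum_i n_i t_i\geq(\min_{i:\,t_i>0} n_i)\sum_i t_i=\Omega(n/m)\,|T|$, which yields $\omega_s=\Omega(n/m)$ and closes the argument.

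The step I expect to be the crux is justifying $n_i=\Omega(n/m)$ for every shard that actually processes transactions, i.e.\ ruling out a degenerate protocol that routes all of $T$ into a single sparsely-populated shard while parking the remaining parties in near-empty shards. This is exactly where the security model must be invoked: under the slowly-adaptive adversary, a shard running consensus on real transactions cannot be maintained by too few honest parties without losing its safe honest fraction, so the randomized shard-assignment that guarantees per-shard security necessarily distributes parties as $\Theta(n/m)$ across the shards. Making this coupling between party count and transaction load precise is the delicate part of the proof, and it foreshadows the sharper $n/\log n$ ceiling established later for uniformly random party assignment.
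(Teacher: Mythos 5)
Your proposal is correct, and its skeleton is the same as the paper's: lower-bound the space overhead factor by $n/m$, observe $\Sigma \leq n/\omega_s \leq O(m)$, and let the scalability requirement $\Sigma=\omega(1)$ force $m=\omega(1)$ (the paper phrases this as a contradiction with constant $m$; you argue contrapositively, which is logically identical). The two proofs differ, however, in how the counting is closed, and the difference is exactly at the step you flag as the crux. The paper counts \emph{per party}: it asserts that each shard's ledger holds $T/m$ transactions on expectation (balanced transaction load) and takes the best case in which each party belongs to exactly one shard, so \emph{every} party stores $T/m$ stable transactions regardless of how parties are distributed across shards, giving $\omega_s = n\cdot\frac{T/m}{T}=\frac{n}{m}$ with no claim about shard populations at all. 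You count \emph{per shard}, via $\sum_i n_i t_i$, which obliges you to rule out the degenerate allocation (all of $T$ routed to a sparsely populated shard) through the party-balance claim $n_i=\Omega(n/m)$ for every transaction-carrying shard. That claim is not established in the general-bounds setting of the paper; party balance is only proven later, under uniformly random assignment of parties to shards (Lemma~\ref{lem:shard-size}), so your route leans on a security-driven balance property that the paper's own proof deliberately sidesteps by pinning storage per party rather than per shard. The trade-off is symmetric: the paper's argument needs balanced transaction load (which it assumes on expectation, consistent with its model discussion) but tolerates arbitrarily unbalanced shard sizes, whereas yours tolerates arbitrarily unbalanced transaction loads but needs roughly equal shard sizes; either assumption suffices, and your version would become fully rigorous once the $n_i=\Omega(n/m)$ step is discharged by the security argument you sketch.
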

\begin{proof}
Suppose there is a protocol that maintains a constant number $m$ of sharded ledgers, denoted by $x_1, x_2, \dots, x_m$. Let $n$ denote the number of parties and $T$ the number of transactions to be processed (wlog assumed to be valid). 
A transaction is processed only if it is stable, \ie is included deep enough in a ledger ($k$ blocks from the end of the ledger where $k$ a security parameter).
Each ledger will include $T/m$ transactions on expectation. Now suppose each party participates in only one ledger (best case), thus broadcasts, verifies, and stores the transactions of that ledger only. Hence, every party stores $T/m$ transactions on expectation. The expected space  factor is $\omega_s=\sum_{\forall i \in [n]} \sum_{\forall x \in L_i} |x^{\lceil k}| / (n|T|) = \sum_{\forall x \in L_i} \frac{T}{nmT} = \frac{n}{nm} = \Theta (\frac{1}{m})=\Theta (1)$, when $m$ in constant. Thus, scalability is not satisfied.
\end{proof}

Suppose a party is participating in shard $x_i$. If the party maintains information (\eg\ the headers of the chain for verification purposes) on the chain of shard $x_j$, we say that the party is a \textit{light node} for shard $x_j$.
In particular, \textit{a light node for shard $x_j$ maintains information at least proportional to the length of the shard's chain $x_j$}. This holds because blocks must be of constant size to be able to scale in bandwidth (aka communication), and thus storing all the headers of a shard is asymptotically similar in overhead to storing the entire shard with the block content. 
Sublinear light clients~\cite{kiayias2020non,bunz2019flyclient}
verifiably compact the shard's state, thus are not considered light nodes but are discussed later.
 We next prove that if parties act as light clients to all shards involved in cross-shard transactions, then the sharded ledger can scale only if each shard does not interact with all the other shards (or a constant fraction thereof).

\begin{restatable}{lem}{crossbound}\label{lem:cross-bound} 
For any robust sharded transaction ledger that requires every participant to be a light node for all the shards affected by cross-shard transactions, it holds  $\mathbb{E}(\gamma)=o(m)$.
\end{restatable}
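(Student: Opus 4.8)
The plan is to derive the bound on $\mathbb{E}(\gamma)$ by tying the storage requirement induced by cross-shard interactions directly to the space overhead factor $\omega_s$ and then invoking the scalability property. The key observation is that if a party participating in shard $x_i$ must be a light node on every shard with which $x_i$ shares a cross-shard transaction, then each party stores (at minimum) header-level information for roughly $\gamma$ shards rather than just its own. Since by Theorem~\ref{thm:shards} we have $m=\omega(1)$, and since in a scalable protocol the $n$ participants are spread across the $m$ shards, the total amount of replicated light-node information grows like $n\cdot\mathbb{E}(\gamma)$ worth of per-shard data, which must be compared against the baseline $|T|$ of a single non-sharded database.

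First I would make precise what ``light node'' storage contributes to $\omega_s$: a party that is a light node on $\gamma$ foreign shards is effectively maintaining a (possibly pruned) copy of state information associated with each of those shards, so summing over all $n$ parties the stored information scales with the product of $n$ and the expected number of foreign shards $\mathbb{E}(\gamma)$ each party tracks. Next I would argue that for the protocol to satisfy scalability (Definition~\ref{def:scalability}), we need $\Sigma=n/\max(\omega_m,\omega_s,\omega_c)=\omega(1)$, which forces $\omega_s=o(n)$. Translating the light-node storage estimate into $\omega_s$ and setting $\omega_s=o(n)$ should yield, after normalizing by the input size and the per-shard load, the constraint $\mathbb{E}(\gamma)=o(m)$: intuitively, a party cannot afford to shadow a constant fraction of all $m$ shards, because doing so would make its storage (and hence the aggregate $\omega_s$) proportional to $n$, collapsing the scaling factor to $O(1)$.

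The main obstacle I anticipate is pinning down the exact relationship between $\mathbb{E}(\gamma)$ and $\omega_s$ cleanly, because the space overhead factor as defined counts \emph{transactions} in pruned ledgers (via $|L^{\lceil k}|$), whereas light-node information is header/state information rather than full transaction sequences. I would handle this by arguing that even minimal per-shard information, when maintained across $\mathbb{E}(\gamma)$ foreign shards by each of the $n/m$ parties per shard (times $m$ shards), contributes a storage term whose growth in $n$ is governed by $\mathbb{E}(\gamma)$ relative to $m$; the subtlety is ensuring the footnote-style ``minimal information'' is still enough to make the counting nontrivial, so that tracking a $\Theta(m)$ fraction of shards genuinely violates $\omega_s=o(n)$. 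Once that correspondence is established, the conclusion $\mathbb{E}(\gamma)=o(m)$ follows by contraposition: if $\mathbb{E}(\gamma)=\Omega(m)$ then each party effectively mirrors a constant fraction of the entire system, driving $\omega_s=\Omega(n)$ and thus $\Sigma=O(1)$, contradicting scalability.
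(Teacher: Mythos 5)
Your proposal is correct and takes essentially the same route as the paper's proof: charge each party's light-node storage on its $\gamma$ foreign shards to the space overhead factor, obtaining $\mathbb{E}(\omega_s)=\Theta\bigl(\gamma n/m\bigr)$, and then invoke scalability ($\Sigma=\omega(1)$, hence $\omega_s=o(n)$) to conclude $\mathbb{E}(\gamma)=o(m)$. The headers-versus-transactions subtlety you flag is resolved in the paper exactly along the lines you sketch, by using the constant block size: each block contains at most a constant number $e$ of transactions, so per-shard header information is $\Theta(T/m)$ up to the factor $1/e$, making each party's expected storage $(1+\gamma/e)\tfrac{T}{m}$ and the counting nontrivial.
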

\begin{proof}
We assumed that every ledger interacts on average with $\gamma$ different ledgers, \ie, the cross-shard transactions involve $\gamma$ many different shards on expectation. 
The block size is considered constant, meaning each block includes at most $e$ transactions where $e$ is constant. Thus, each party maintaining a ledger and being a light node to $\gamma$ other ledgers must store on expectation $(1+\frac{\gamma}{e})\frac{T}{m}$ information. 
Hence, the expected space  factor is 
$$\mathbb{E}(\omega_s)= \sum_{\forall i \in [n]} \sum_{\forall x \in L_i} |x^{\lceil k}| / (n|T|) = n \frac{(1+\frac{\gamma}{e})\frac{T}{m}}{nT}= \Theta\Big(\frac{\gamma}{m}\Big)$$
where the second equation holds due to linearity of expectation.
To satisfy scalability, we demand  $\mathbb{E}(\omega_s)=o(1)$, thus $\gamma=o(m)$.
\end{proof}

Next, we show that there is no protocol that maintains a robust transaction ledger against an adaptive adversary in our model. 
We highlight that our result holds because we assume \textit{any node is corruptible} by the adversary.
If we assume more restrictive corruption sets, \eg\ each shard has at least one honest well-connected node, sharding against an adaptive adversary may be possible if we employ other tools, such as fraud and data availability proofs~\cite{albassam2018fraud}.

\adaptive*
\begin{proof} 
(Towards contradiction) Suppose there exists a protocol $\Pi$ that maintains a robust sharded ledger against an adaptive adversary that corrupts $f=n/m$ parties. 
From the pigeonhole principle, there exists at least one shard $x_i$ with at most $n/m$ parties (independent of how shards are created). The adversary is adaptive, hence at any round can corrupt all parties of shard $x_i$.
In a malicious shard, the adversary can perform arbitrary operations, thus can spend the same UTXO in multiple cross-shard transactions. However, for a cross-shard transaction to be executed it needs to be accepted by the output shard, which is honest. 
Now, suppose $\Pi$ allows the parties of each shard to verify the ledger of another shard. For Lemma \ref{lem:cross-bound} to hold, the verification process can affect at most $o(m)$ shards. Note that even a probabilistic verification, \ie, randomly select some transactions to verify, can fail due to storage requirements and the fact that the adversary can perform arbitrarily many attacks.
Therefore, for each shard, there are at least $2$ different shards that do not verify the cross-shard transactions (since Lemma \ref{lem:cross-bound} essentially states they cannot all be verified). Thus, the adversary can simply attempt to double-spend the same UTXO across every shard and will succeed in the shards that do not verify the validity of the cross-shard transaction. Hence, consistency is not satisfied.
\end{proof}

\subsection{Bounds under Uniform Shard Creation}\label{subsec:random-shards}

In this section, we assume that the creation of shards is UTXO-dependent; transactions are assigned to shards independently and uniformly at random.
This assumption is in sync with the proposed protocols in the literature. In a non-randomized process of creating shards, the adversary can precompute and thus bias the process in a permissionless system. Hence, all sharding proposals employ a random process for shard creation. 
Furthermore, all shards validate approximately the same amount of transactions; otherwise the efficiency of the protocol would depend on the shard that validates most transactions.
For this reason, we assume the UTXO space is partitioned to shards uniformly at random. 
Note that we consider UTXOs to be random strings.

Under this assumption, we prove a constant fraction of transactions are cross-shard on expectation.
As a result, we prove no sharding protocol can maintain a robust sharded ledger when participants  act as  light  clients on all shards involved in cross-shard transactions.
Our observations hold for any transaction distribution $D_T$ that results in a constant fraction of cross-shard transactions.

\begin{restatable}{lem}{crossshardone}\label{lem:cross-shard}
The expected number of cross-shard transactions is $\Theta(|T|)$. 
\end{restatable}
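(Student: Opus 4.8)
The plan is to show that, under the assumption that the UTXO space is partitioned uniformly at random into $m$ shards, almost every transaction touches more than one shard, so that the expected count of cross-shard transactions is asymptotically $|T|$. First I would fix an arbitrary transaction $tx \in T$ and recall from Section~\ref{subsubsec:txmodel} that $tx$ consists of $v$ UTXOs (inputs and outputs combined), where $v \geq 2$ is the average transaction size treated as a small constant. Since each UTXO is modeled as a random string and the partition assigns UTXOs to shards independently and uniformly at random, each of the $v$ UTXOs of $tx$ lands in one of the $m$ shards with probability $1/m$ each, independently.

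The key computation is then the probability that $tx$ is \emph{not} cross-shard, i.e.\ that all $v$ of its UTXOs fall into the same shard. I would compute
\[
\Pr[tx \text{ is intra-shard}] = m \cdot \left(\frac{1}{m}\right)^v = \frac{1}{m^{v-1}},
\]
by summing over the $m$ choices of the common shard. Hence $\Pr[tx \text{ is cross-shard}] = 1 - m^{-(v-1)}$. Defining an indicator random variable $X_{tx}$ that is $1$ when $tx$ is cross-shard, the expected number of cross-shard transactions is $\mathbb{E}\!\left[\sum_{tx \in T} X_{tx}\right] = |T|\,(1 - m^{-(v-1)})$ by linearity of expectation. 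Since $v \geq 2$ and, by Theorem~\ref{thm:shards}, $m = \omega(1)$ in any robust sharded ledger, the correction term $m^{-(v-1)} \to 0$, so the expected number of cross-shard transactions is $|T|(1 - o(1))$, which I would report as $|T|$ in the asymptotic sense claimed by the lemma.

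The main subtlety to handle carefully is the meaning of ``size $v$'' and how independence across UTXOs is justified. I would make explicit that $v$ counts the combined inputs and outputs so that a transaction with all UTXOs in one shard is exactly the intra-shard case, and I would lean on the paper's stated modeling assumptions (UTXOs as random strings, uniform independent assignment) to legitimize the product form $(1/m)^v$ rather than trying to derive independence from scratch. A secondary point is that $v$ is only an \emph{average} transaction size; for the clean bound I would either treat $v$ as the per-transaction size (the worst case being $v = 2$, which already gives intra-shard probability $1/m \to 0$) or note that Jensen/averaging over the distribution $D_T$ only strengthens the conclusion since the intra-shard probability is a decreasing convex function of $v$. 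The arithmetic itself is routine; the only place where care is genuinely needed is in invoking $m = \omega(1)$ to discard the lower-order term and thereby justify equating the expectation with $|T|$.
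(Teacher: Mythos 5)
Your proposal is correct and follows essentially the same argument as the paper: compute the intra-shard probability $1/m^{v-1}$, use indicator variables and linearity of expectation to get $|T|\bigl(1 - 1/m^{v-1}\bigr)$, and invoke Theorem~\ref{thm:shards} ($m = \omega(1)$) with $v$ constant to conclude the expectation converges to $|T|$. The extra care you take in justifying the product form and the worst case $v=2$ is a harmless refinement of the same approach.
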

\begin{proof}
Let $Y_i$ be the random variable that shows if a transaction is cross-shard; $Y_i=1$ if $tx_i \in T$ is cross-shard, and $0$ otherwise.
Since UTXOs are assigned to shards uniformly at random, $Pr[i\in x_k]=\frac{1}{m}$, for all $i\in v$ and $k \in [m]=\{1, 2,\dots, m \}$. 
The probability that all UTXOs in a transaction $tx\in T$ belong to the same shard is $\frac{1}{m^{v-1}}$ (where $v$ is the cardinality of UTXOs in $tx$).
Hence, $Pr[Y_i=1]=1-\frac{1}{m^{v-1}}$. Thus, the expected number of cross-shard transactions is $\mathbb{E}(\sum_{\forall tx_i \in T} Y_i)= |T|\big(1-\frac{1}{m^{v-1}} \big)$. 
Since, $m(n)=\omega(1)$ (Lemma \ref{thm:shards}) and $v$ constant, the expected cross-shard transactions converges to $T$ for $n$ sufficiently large.
\end{proof}

\begin{restatable}{lem}{crossshardtwo}\label{lem:cross-shard-s}
For any protocol that maintains a robust sharded transaction ledger, it holds $\gamma=\Theta(m)$.
\end{restatable}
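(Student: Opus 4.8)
The plan is to prove the two inequalities $\gamma = O(m)$ and $\gamma = \Omega(m)$ separately, combining them for the tight bound $\gamma = \Theta(m)$. The upper bound $\gamma = O(m)$ is essentially free: each ledger can interact with at most all of the other $m-1$ shards, so trivially $\gamma \leq m-1 = O(m)$. The substance of the claim lies in the matching lower bound $\mathbb{E}(\gamma) = \Omega(m)$, which I would derive from Lemma~\ref{lem:cross-shard}.

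For the lower bound, I would reason as follows. By Lemma~\ref{lem:cross-shard}, when the UTXO space is partitioned uniformly at random into $m$ shards, the expected number of cross-shard transactions is $|T|$; that is, essentially every transaction is cross-shard. Taking the worst-case transaction size $v=2$ (one input UTXO and one output UTXO, as discussed in Section~\ref{subsec:general}), each cross-shard transaction connects its input shard to an independently and uniformly chosen output shard. Fix a shard $S_i$. The number of \emph{other} shards it interacts with is the number of distinct shards hit by the transactions touching $S_i$. The key step is a balls-into-bins / coupon-collector style computation: since each of the roughly $|T|/m$ transactions originating in $S_i$ picks its partner shard uniformly at random among the remaining $m-1$ shards, the expected number of distinct partner shards is
\[
\mathbb{E}(\gamma) = (m-1)\left(1 - \left(1 - \tfrac{1}{m-1}\right)^{|T|/m}\right).
\]
The plan is to show this is $\Theta(m)$ whenever $|T|/m = \Theta(m)$, i.e.\ when the number of transactions per shard grows linearly in $m$ (equivalently $|T| = \Theta(m^2)$, the natural regime for a system processing proportionally many transactions). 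Under that assumption the exponent $|T|/(m-1)$ is $\Theta(m)$, the term $(1-\tfrac{1}{m-1})^{|T|/m}$ tends to a constant strictly less than $1$, and hence $\mathbb{E}(\gamma) = \Theta(m)$.

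I anticipate the main obstacle to be making the dependence on $|T|$ precise and matching the paper's framing. The formula above shows $\gamma$ saturates at $\Theta(m)$ only once $|T|$ is large enough relative to $m$; if $|T|$ were, say, $o(m)$ per shard, then $\gamma$ could be sublinear. So the crux is justifying the operative regime — that a sharded ledger worth analyzing processes $\Omega(m)$ transactions per shard, so that by the time the input set is fully committed every shard has with high probability interacted with a constant fraction of all shards. I would make this explicit by invoking the assumption (from the model) that $T$ is given well in advance and fully processed, together with the requirement that transactions be distributed so all shards carry comparable load. Once that regime is fixed, the concentration around the expectation follows from standard bounded-differences (McDiarmid) or occupancy-bound arguments, letting me upgrade the expectation statement to the high-probability guarantee that the paper's ``satisfies'' convention requires, and conclude $\gamma = \Theta(m)$.
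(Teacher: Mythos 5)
Your proposal is correct and follows essentially the same route as the paper: the paper likewise fixes $v=2$ as the worst case and performs the identical occupancy computation --- phrased as the expected degree of a shard in a random graph whose $|T|$ edges are drawn independently and uniformly at random, yielding $\mathbb{E}[\mathrm{deg}(u)]=(m-1)\bigl[1-\bigl(1-\tfrac{2}{m(m+1)}\bigr)^{|T|}\bigr]$ --- which coincides with your per-shard distinct-partner count up to constant factors. Like you, the paper needs a large-input regime to conclude (it assumes $|T|=\omega(m^2)$ where you assume $|T|=\Theta(m^2)$; either suffices), so your explicit flagging of that dependence on $|T|$ is faithful to, and if anything slightly more candid than, the paper's own proof.
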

\begin{proof} 
We assume each transaction has a single input and output, hence $v=2$. This is the worst-case input for evaluating how many shards interact per transaction; if $v\gg 2$ then each transaction would most probably involve more than two shards and thus each shard would interact with more different shards for the same set of transactions.

For $v=2$, we can reformulate the problem as a graph problem. 
Suppose we have a random graph $G$ with $m$ nodes, each representing a shard. 
Now let an edge between nodes $u$ and $w$ represent a  transaction between shards $u$ and $w$. Note that in this setting we allow self-loops, which represent the intra-shard transactions.
We create the graph $G$ with the following random process: We choose an edge independently and uniformly at random from the set of all possible edges including self-loops, denoted by $E'$. We repeat the process independently $|T|$ times, \ie, as many times as the cardinality of the transaction set.
We note that each trial is independent and the edges chosen uniformly at random due to the corresponding assumptions concerning the transaction set and the shard creation.
We will now show that the average degree of the graph is $\Theta(m)$, which immediately implies the statement of the lemma.

Let the random variable $Y_i$ represent the existence of edge $i$ in the graph, \ie, $Y_i=1$ if edge $i$ was created at any of the $T$ trials, $0$ otherwise. The set of all possible edges in the graph is $E$, $|E|=\binom{m}{2}=\frac{m(m-1)}{2}$. 
Note that this is not the same as set $E'$ which includes self-loops and thus $|E'|=\binom{m}{2}+m=\frac{m(m+1)}{2}$.
For any vertex $u$ of $G$, it holds \[\mathbb{E}[deg(u)]=\frac{2 \mathbb{E}[\sum_{\forall i \in E} Y_i]}{m} \]
where $deg(u)$ denotes the degree of node $u$.
We have, 
\[Pr[Y_i=1] = 1-Pr[Y_i=0]=\]\[ 1- Pr[Y_i=0 \text{ at trial 1}] Pr[Y_i=0 \text{ at trial 2}] \dots  \] \[Pr[Y_i=0 \text{ at trial T}]= 1-\Big(1-\frac{2}{m(m+1)} \Big)^{|T|}\]
Thus, \[\mathbb{E}[deg(u)]=\frac{2m(m-1)}{2}\Big[ 1-\Big(1-\frac{2}{m(m+1)} \Big)^{|T|} \Big]\] \[=(m-1)\Big[ 1-\Big(1-\frac{2}{m(m+1)} \Big)^{|T|} \Big]\]
Therefore, for many transactions we have $|T|=\omega(m^2)$ and consequently $\mathbb{E}[deg(u)]= \Theta(m)$.
\end{proof}

\light*
\begin{proof}
Immediately follows from Lemmas \ref{lem:cross-bound} and  \ref{lem:cross-shard-s}.
\end{proof}

\subsection{Bounds under Random Permutation of Parties to Shards}\label{subsec:random-parties}
In this section, we assume parties are periodically randomly shuffled among shards, using a random permutation of their IDs.
Any other shard assignment strategy yields equivalent or worse guarantees since we have no knowledge of which parties are Byzantine.
Our goal is to upper bound the number of shards for a protocol that maintains a robust sharded transaction ledger in our security model.
To satisfy the security properties, we demand each shard to contain at least a constant fraction of honest parties $1-a$ $(< 1-\frac{f}{n})$, where $a$ is the tolerance of the shards. This is due to classic lower bounds of consensus protocols~\cite{lamport1982Byzantine}.

The \textit{size} of a shard is the number of the parties assigned to the shard.
We say shards are \textit{balanced} if all shards have approximately the same size. In what follows, we assume shards to be balanced (this can be done by drawing uniformly at random a balanced partition of parties).
We denote by $p=f/n$ the (constant) fraction of the Byzantine parties. A shard is \textit{a-honest} if at least a fraction of $1-a$ parties in the shard are honest.

The following lemma, proven by Raab and Steger~\cite{raab1998balls} will be useful later:
\begin{restatable}{lem}{shardsizelem}\label{lem:raab-steger}
Let M be the random variable that counts the number of balls in any bin if we throw $pn$ balls independently and uniformly at random into m bins. Then $Pr[M>k_{\alpha}] = o(1)$ if $\alpha >1$ and $Pr[M>k_{\alpha}] = 1 - o(1)$ if $0 <\alpha < 1$, where
\begin{equation} 
  k_{\alpha} =
    \begin{cases}
      \frac{\log{m}}{\log{\frac{m\log{m}}{pn}}} * (1+\alpha\frac{log^{(2)}\frac{m\log{m}}{pn}}{\log \frac{m\log{m}}{pn}}) & \text{if $\frac{m}{polylog(m)}\leq pn \ll m\log{m}$,}\\
      
      (d_c-1+\alpha)\log{m} & \text{if $pn=cm\log{m}$ for some constant $c$,}\\
      
      \frac{pn}{m} + \alpha \sqrt{2\frac{pn}{m}\log{m}} & \text{if $m\log{m}\ll pn\leq mpolylog(m)$,} \\
      
      \frac{pn}{m}+ \sqrt{\frac{2pn\log{m}}{m}(1-\frac{\log^{(2)}{m}}{2\alpha\log{m}})} & \textit{if $pn \gg m(\log{m})^{3}$}
    \end{cases}       
\end{equation}
\end{restatable}

\begin{restatable}{lem}{shardsizelem}\label{lem:shard-size}
Given $n$ parties are assigned uniformly at random to $m$ shards of constant size $s=\frac{n}{m}$ and the adversary corrupts at most $f=pn$ parties, all shards are $a$-honest ($p, a$ are constants with $p$ the proportion of corrupted parties and $a$ the tolerance of the model) with probability $1-o(1)$ if and only if the number of shards is at most $n=
cmlog(m)/p$, where $c$ is a constant and $p/a$ is small enough depending only on the value of $c$.
\end{restatable}
\begin{proof}
    We start by reformulating the problem in order to show it is equivalent to the well-know Generalized Birthday Paradox.

    Assuming we build $m$ shards of equivalent size $s=\frac{n}{m}$ using a random permutation with uniform probability. 
    Then this is equivalent to distributing the Byzantine processes to shards at random following a uniform law, but with the shards being of maximum size $s$. In other words, we throw $f=pn$ balls in $m$ bins of limited capacity $s$. We would like to know the probability that the maximum load of the bins be greater or equal to $a$. 
    
    Reformulated as the Birthday paradox, what is the probability that, in a room of $n$ people whose birthdays are spread uniformly at random over $m$ days, $a$ people share the same birthday? We denote that probability by $f(pn,m,a)$.
    
    Notice that our reformulation as the Birthday Paradox does not take into account the limited size of the possible birthdays (no more than $s$ people can have the same birthday). Both problems are however equivalent, as we can reconstruct that probability easily using Bayes' formula:
    
    $$P(A|B) = \frac{P(B|A)*P(A)}{P(B)} $$ 
    Where $A = $"the maximum load is $\leq as$", $B = $ "the maximum load is $\leq s$" and $A|B = C = $"all shards are $a-honest$.
    $P(B|A) = 1$ since $a<1$ so
    $$ P(C) = \frac{P(A)}{P(B)} $$ hence solving the Birthday Paradox solves our problem with very little additional calculation. Our calculation will actually be conducted using $A' = $"the maximum load is $\geq as$" and $B' = $ "the maximum load is $\geq s$"
    
    $$ P(C) = \frac{1-P(A')}{1-P(B')} $$

    Since $\frac{1-o(1)}{1-o(1)}\geq 1-o(1)$, it is sufficient for $P(C) = 1-o(1)$ that $P(A')=o(1)$ and $P(B') = o(1)$.
    The problem is sometimes denoted as the Cell Occupancy Problem~\cite{fisher2013birthday}.

    We then use lemma~\ref{lem:raab-steger} (beware, in the original paper~\cite{raab1998balls} $n$ and $m$ are reverse when compared with our notation).
    We want $\alpha > 1$, $k_{\alpha} = \frac{an}{m}$. 
    
    When applying this, we immediately get impossible equations for the third and fourth values of $k_{\alpha}$, hence it is not possible to have m in that range of values compared to n ($m\gg nlog(n)$) :

    $$ \frac{an}{m} = \frac{pn}{m} + \alpha \sqrt{2\frac{pn}{m}\log{m}}$$
    $$ \frac{(a-p)n}{m} = \alpha \sqrt{2\frac{pn}{m}\log{m}}$$
    $$ \frac{n}{m} = \frac{\alpha\sqrt{2p}}{(a-p)} \sqrt{\frac{n}{m}\log{m}}$$
    $$ \sqrt{n} = \frac{\alpha\sqrt{2p}}{(a-p)} \sqrt{m\log{m}}$$
    $$ n = \frac{\alpha^22p}{(a-p)^2}m\log{m}$$
    As we can see, we also violate the hypothesis that $pn\gg m\log{m}$, which is absurd. 
    For the fourth equation, we can simply notice that since $\alpha >1$, $(1-\frac{\log^{(2)}{m}}{2\alpha\log{m}}) \leq 1$ hence reusing the calculation made for the third case $n$ will be even smaller when compared with $m\log{m}$, thus the hypothesis $pn \gg m(\log{m})^{3}$ is broken.
    
    The equations however is correct under the hypothesis that $pn = cm\log{m}$ (see calculation below). This indicates that this is as high a value of $m$ we can use while keeping the shards safe with overwhelming probability.
    $$ \frac{an}{m} = (d_c - 1 + \alpha)\log{m} $$
    $$ n = \frac{1}{a}(d_c - 1 + \alpha)m\log{m} $$
    We can see already that we are indeed verifying the hypothesis $pn = cm\log{m}$ for some constant $c$ (the constant $d_c$ is a scalar not dependant on either $n$ or $m$). 
    if $k_{\alpha} = \frac{n}{m}$, then $n = (d_c - 1 + \alpha)m\log{m}$ and the hypothesis is also verified.

    We now need to make sure that $\alpha > 1$ for both cases. 

    Since, by hypothesis, $pn = cm\log{m}$, we identify that $c = \frac{p}{a}(d_c - 1 + \alpha)$, where $d_c \geq c$. In order to obtain $\alpha > 1$, it is necessary that $c > \frac{p}{a} d_c$ where $p < a$. $d_c$ is a function of $c$ with $d_c >c$, hence for a given $c$ it is always possible to enforce $\alpha > 1$ if $p/a$ is small enough.
    
    for the case $k_{\alpha} = \frac{n}{m}$, the previous result holds trivially with $a = 1$.

    
\end{proof}

Using the previous calculations, we can exhibit the trade-off between security and scalability in a mathematical formulation in corollary~\ref{cor:shard-size}. A systems designer may choose to adjust either parameter $p/a$ or $c$, one being computed thanks to the chosen value of the other. Since the expression is not mathematically intuitive, we provide a plotting of the increasing function $p/a = g(c)$ in Figure~\ref{fig:tradeoff}.


\begin{figure}[ht!]
    \centering
    \begin{subfigure}{.5\textwidth}
    \includegraphics[scale=0.121,trim={0 1.9cm 0 1.7cm},clip]{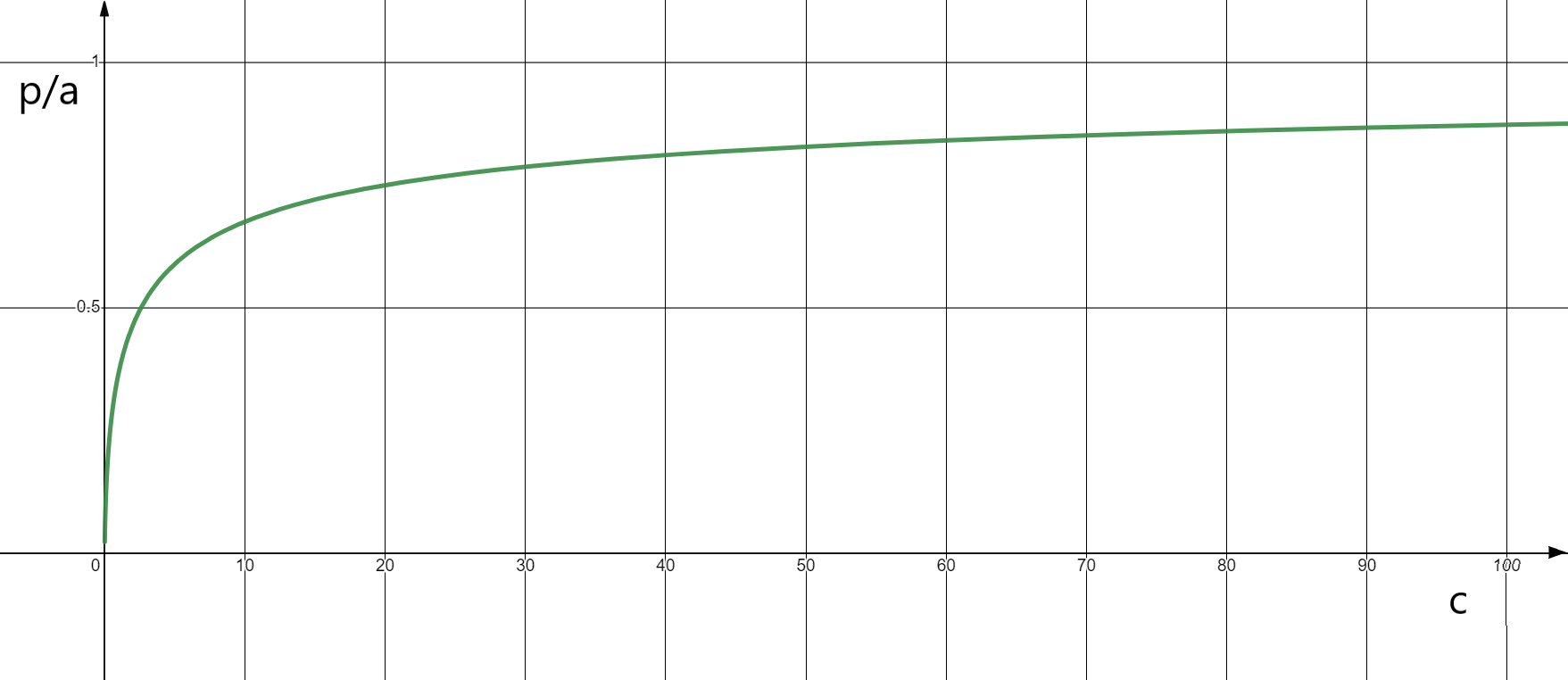}
    \caption{zoomed from 0 to 100}
    \label{fig:subsmall}
\end{subfigure}%
\begin{subfigure}{.5\textwidth}
    \centering
    \includegraphics[scale=0.13]{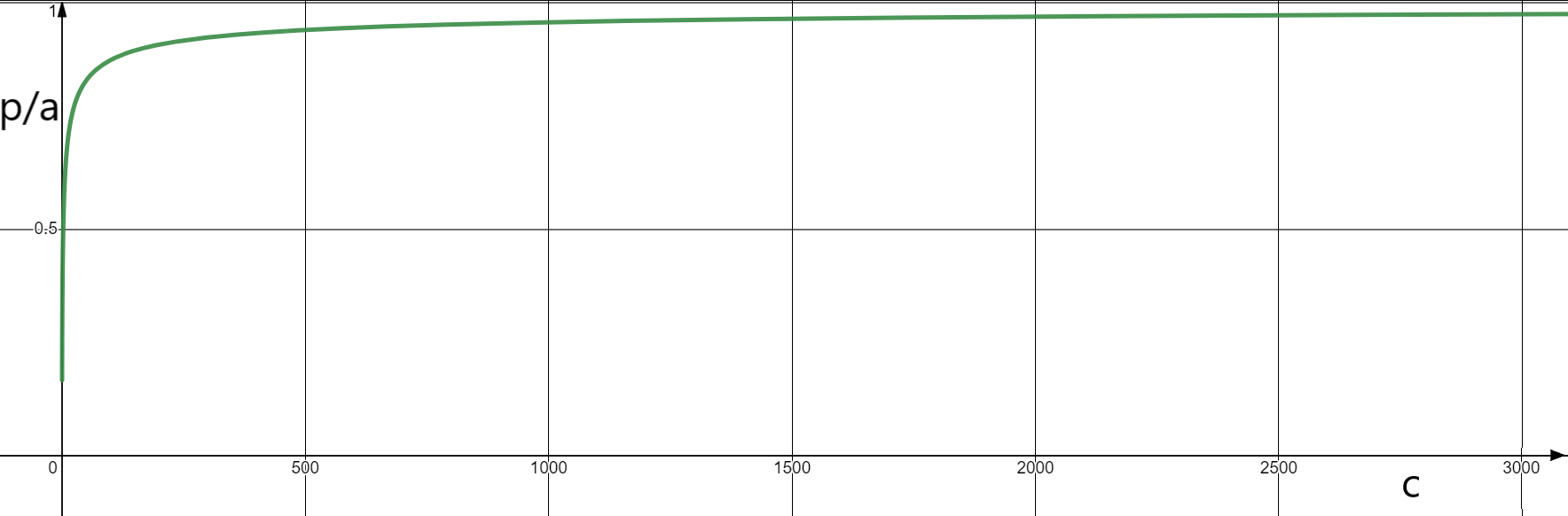}
    \caption{zoomed from 0 to 3000}
    \label{fig:sub3000}
    \end{subfigure}
    \caption{$p/a = g(c)$ as described in corollary~\ref{cor:shard-size}. $p$ is the  proportion of corrupted parties in the system, while $1-a$ is the maximum proportion of corrupted parties allowed per shard.}
    \label{fig:tradeoff}
\end{figure}

\begin{restatable}{cor}{shardsizecor}\label{cor:shard-size}
    In a sharding protocol maintaining a robust sharded transaction ledger against an adversary, the trade-off between scalability (low value of $c$) and security (high value of $p/a$) is described by $\frac{c}{d_c} > \frac{p}{a}$. $c$ is the multiplicative constant in the relation $pn=cm\log(m)$, $d_c$ is a function of $c$, while $p$  and $1-a$ are the proportion of corrupted parties in the system and per shard, respectively.
\end{restatable}
\begin{proof}
According to lemma~\ref{lem:shard-size}, the constant $d_c$ is a real number dependant only on $c$ and 
$$\frac{c}{d_c} > \frac{p}{a}$$ which means the value of $p/a$ is ceiled by the value of $c/d_c$.

As explained in~\cite{raab1998balls}, $d_c$ is the solution to the equation $1 + x(log(c) - log(x) + 1 ) - c = 0$ that is greater than $c$. Thus we have the exact mathematical expression of the well-known security/scalability trade-off.
\end{proof}

\begin{restatable}{cor}{upper_bound_cor}\label{cor:upper-bound}
    In a sharding protocol maintaining a robust sharded transaction ledger against an adversary, $m$ is upper-bounded by $f(n) = \frac{n}{c'\log(\frac{n}{c'\log(n)})}$ with $c' = \frac{c}{p}$ and $c$ a constant as described in corollary~\ref{cor:shard-size}.
\end{restatable}
\begin{proof}
    Because of lemma~\ref{lem:shard-size}, $cm\log(m)=pn$. using $m = \frac{n}{c'\log(m)}$ (a), we obtain $m = \frac{n}{c'\log(\frac{n}{c'\log(m)})}$ and since $ n \geq m$, an upper-bound is $f(n) = \frac{n}{c'\log(\frac{n}{c'\log(n)})}$. Note we could build a tighter but more complex upper bound by replacing $m$ by its expression (a) instead of $n$ as many times as desired. 
\end{proof}

Next, we prove that any sharding protocol may scale at most by an $n/\log{n}$ factor.
This bound refers to independent nodes. If, for instance, we ``shard'' per authority, but all authorities represented in each shard, the bound of the theorem does not hold and the actual system should be considered sharded since every authority holds all the data.

\scale*
\begin{proof}
In our security model, the adversary can corrupt $f=pn$ parties, $p$ constant. Hence, from Corollary \ref{cor:shard-size}, $m = O(\frac{n}{\log m})$. Each party stores at least $T/m$ transactions on average and thus the expected space  factor is $\omega_s \geq n\frac{T/m}{T}=\frac{n}{m}$. Therefore, any sharding protocol can scale at most $O(\frac{n}{\log m})$.
\end{proof}

Next, we show that any sharding protocol that satisfies scalability requires some process of \textit{verifiable compaction of state} such as checkpoints~\cite{kokoris2017omniledger}, cryptographic accumulators~\cite{boneh2019batching}, zero-knowledge proofs~\cite{zkproofs2020}, non-interactive proofs of proofs-of-work~\cite{kiayias2020non,bunz2019flyclient}, proof of necessary work~\cite{assimakis2019proof} or erasure codes~\cite{kadhe2019sef}.
Such a process allows the state of the distributed ledger (\eg, stable transactions) to be compressed significantly while users can verify the correctness of the state.
Intuitively, in any  sharding protocol secure against a slowly adaptive adversary parties must periodically shuffle in shards. To verify new transactions the parties must receive a verifiably correct UTXO pool for the new shard without downloading the full shard history; otherwise the communication overhead of the bootstrapping process eventually exceeds that of a non-sharded blockchain.
Although existing evaluations typically ignore this aspect with respect to bandwidth, we stress its importance in the long-term operation: \textit{the bootstrap cost will eventually become the bottleneck due to the need for nodes to regularly shuffle.}

\checkpoints*
\begin{proof}
(Towards contradiction) 
Suppose there is a protocol that maintains a robust sharded ledger without employing any process that verifiably compacts the blockchain. 
To guarantee security against a slowly-adaptive adversary, the parties change shards at the end of each epoch. 
At the beginning of each epoch, the parties must process a new set of transactions. To check the validity of this new set of transactions, each (honest) shard member downloads and maintains the corresponding ledger. 
Note that even if the party only maintains the hash-chain of a ledger, the cost is equivalent to maintaining the list of transactions given that the block size is constant.
We will show that the communication  factor increases with time, eventually exceeding that of a non-sharded blockchain; thus scalability is not satisfied from that point on.

In each epoch transition, a party changes shards with probability $1-1/m$, where $m$ is the number of shards. 
As a result, a party changing a shard in epoch $k$ must download the shard's ledger of size $\dfrac{k\cdot T}{m}$. Therefore, the expected communication  factor of bootstrapping during the $k$-th epoch transition is $\dfrac{k\cdot T}{m} \cdot (1-\dfrac{1}{m})$. 
We observe the communication overhead grows with the number of epochs $k$, hence it will eventually become the scaling bottleneck. For instance, for $k> m\cdot n$, the communication factor is greater than linear to the number of parties in the system $n$, thus the protocol does not satisfy scalability.
\end{proof}
 Theorem~\ref{thm:checkpoints} holds even if parties are not assigned to shards uniformly at random but follow some other shuffling strategy like in \cite{rana2020free2shard}. \textit{As long as a significant fraction of honest parties change shards from epoch to epoch, verifiable compaction of state is necessary} to restrict the bandwidth requirements during bootstrapping in order to satisfy scalability.

\section{Analysis}\label{app:analysis}
We show that Divide \& Scale is secure in our model (\ie, satisfies persistence, consistency, and liveness), while its efficiency (\ie, scalability and throughput factor) depends on the chosen subprotocols.
For the purpose of our analysis, we assume all employed subprotocols satisfy liveness.

\begin{thrm}
Divide \& Scale satisfies persistence in our system model assuming at most f Byzantine nodes.
\end{thrm}
\begin{proof}
Assuming \texttt{Sybil} guarantees the fair distribution of identities (Sybil, property iv), and \texttt{Divide2Shard} maintains the distribution within the desired limits to guarantee the securities bounds of \texttt{Consensus} (Divide2Shard, property iii), the common prefix property is satisfied in each shard, so persistence is satisfied.
\end{proof}

\begin{thrm}
Divide \& Scale satisfies consistency in our system model assuming at most f Byzantine nodes.
\end{thrm}
\begin{proof}
Transactions can either be intra-shard (all UTXOs within a single shard) or cross-shard.
Consistency is satisfied for intra-shard transactions as long as \texttt{Sybil} and \texttt{Divide2Shard} result in a distribution that respects the security bounds of \texttt{Consensus}, hence the common prefix property is satisfied.
Furthermore, consistency is satisfied for cross-shard transactions from the \texttt{CrossShard} protocol as long as it correctly provides atomicity.
\end{proof}

\begin{thrm}
Divide \& Scale satisfies liveness in our system model assuming at most f Byzantine nodes.
\end{thrm}
\begin{proof}
Follows from the assumption that all subprotocols satisfy liveness, as well as the \texttt{CompactState} protocol that ensures data availability between epochs.
\end{proof}

\subsubsection*{{Scalability.}}
The scalability of Divide \& Scale depends on the worse scaling factor, \ie, communication, space, computation, of all the components it employs. 
The maximum scaling factor for \texttt{DRG, Divide2Shards, Sybil}, and \texttt{CompactState} can be amortized over the rounds of an epoch because these protocols are executed once per epoch. Thus, the size of an epoch is critical for scalability.
Intuitively, this implies that \textit{if the size of the epoch is small, hence the adversary highly-adaptive, sharding is not that beneficial as the protocols that are executed on the epoch transaction are as resource demanding as the consensus in a non-sharded system.}

\subsubsection*{{Throughput factor.}}
Similarly to scalability, the throughput factor also depends on the chosen subroutines, and in particular,  \texttt{Consensus} and \texttt{CrossShards}.
To be specific, the throughput factor depends on the shard growth and shard quality parameters which are determined by  \texttt{Consensus}. 
In addition, given a transaction input, the degree of parallelism, which is the last component of the throughput factor, is determined by the maximum number of shards possible and the way cross-shard transactions are handled. The maximum number of shards depends on  \texttt{Consensus} and  \texttt{Divide2Shards}, while  \texttt{CrossShard} determines how many shards are affected by a single transaction.
For instance, if the transactions are divided in shards uniformly at random, Divide \& Scale can scale at most by $n/\log n$ as stated in Corollary~\ref{cor:shard-size}. We further note that the minimum number of affected shards for a specific transaction is the number of UTXOs that map to different shards; otherwise security cannot be guaranteed.

We demonstrate in Appendix~\ref{sec:eval} how to calculate the scaling factors and the throughput factor for \ol and \rc.
\section{Evaluation of Existing Protocols}\label{sec:eval}
In this section, we evaluate existing sharding protocols in our model with respect to the desired properties defined in Section \ref{subsec:ledger}. 
A summary of our evaluation can be found in Table~\ref{tab:comparison} in Section~\ref{app:comaprison}.

The analysis is conducted in the synchronous model and thus any details regarding performance on periods of asynchrony are discarded. The same holds for other practical refinements that do not asymptotically improve performance.

\subsection{Elastico}
\subsubsection{Overview.}
\el is the first distributed blockchain sharding protocol introduced by Luu et al.~\cite{luu2016secure}. The protocol lies in the intersection of traditional BFT protocols and the Nakamoto consensus. 
The protocol is synchronous and proceeds in epochs. 
The setting is permissionless, and during each epoch, the participants create valid identities for the next epoch by producing proof-of-work (PoW) solutions. 
The adversary is slowly-adaptive (see Section~\ref{sec:model}) and controls at most $25\%$ of the  computational power of the system or equivalently $f< \frac{n}{4}$ out of $n$ valid identities in total.

At the beginning of each epoch, parties are partitioned into small shards (committees) of constant size $c$. The number of shards is $m=2^s$, where $s$ is a small constant such that $n=c \cdot 2^s$. A shard member contacts its directory committee to identify the other members of the same shard.
For each party, the directory committee consists of the first $c$ identities created in the epoch in the party's local view.
Transactions are randomly partitioned in disjoint sets based on the hash of the transaction input (in the UTXO model); hence, each shard only processes a fraction of the total transactions in the system. The shard members execute a BFT protocol to validate the shard's transactions and then send the validated transactions to the final committee. The final committee consists of all members with a fixed $s$-bit shards identity, and is in charge of two operations: (i) computing and broadcasting the final block, which is a digital signature on the union of all valid received transactions\footnote{The final committee in \el broadcasts only the Merkle root for each block. However, this is asymptotically equivalent to including all transactions since the block size is constant. Furthermore, the final committee does not check if the received transactions are conflicting but merely verifies the presence of signatures.} (via executing a BFT protocol), and (ii) generating and broadcasting a bounded exponential biased random string to be used as a public source of randomness in the next epoch (\eg\ for the PoW).

\paragraph{\tt{Consensus:}} 
\el does not specify the consensus protocol but instead can employ any standard BFT protocol, like PBFT \cite{castro1999practical}.

\paragraph{\tt{CrossShard \& StatePartition:}} 
Each transaction is assigned to a shard according to the hash of the transaction's inputs. Every party maintains the entire blockchain, thus each shard can validate the assigned transaction independently, \ie, there are no cross-shard transactions. 
Note that \el assumes that transactions have a single input and output, which is not the case in cryptocurrencies as discussed in Section~\ref{sec:limit}. To generalize \el's transaction assignment method to multiple inputs, we assume each transaction is assigned to the shard corresponding to the hash of all its inputs. Otherwise, if each input is assigned to a different shard according to its hash value, an additional protocol is required to guarantee the atomicity of transactions and hence the security (consistency) of \el.

\paragraph{\tt{Sybil:}}
Participants create valid identities by producing PoW solutions using the randomness of the previous epoch.

\paragraph{\tt{Divide2Shards \& CompactState:}} 
The protocol assigns each identity to a random shard in $2^s$, identified by an $s$-bit shard identity.
At the end of each epoch, the final committee broadcasts the final block that contains the Merkle hash root of every block of all shards' block. The final block is stored by all parties in the system. 
Hence, when the parties are re-assigned to new shards they already have the hash-chain to confirm the shard ledger and future transactions.
Essentially, an epoch in \el is equivalent to a block generation round.

\paragraph{\tt{DRG:}} 
In each epoch, the final committee (of size $c$) generates a set of random strings $R$ via a commit-and-XOR protocol. First, all committee members generate an $r$-bit random string $r_i$ and send the hash $h(r_i)$ to all other committee members. Then, the committee runs an interactive consistency protocol to agree on a single set of hash values $S$, which they include on the final block. Later, each (honest) committee member broadcasts its random string $r_i$ to all parties in the network. Each party chooses and XORs $c/2+1$ random strings for which the corresponding hash exists in $S$. The output string is the party's randomness for the epoch. Note that $r>2\lambda+c-\log(c)/2$, where $\lambda$ is a security parameter.

\subsubsection{Analysis.}
\el's threat model allows for adversaries that can drop or modify messages, and send different messages to honest parties, which is not allowed in our model. 
However, we show that even under a more restrictive adversarial model, \el fails to meet the desired sharding properties. Specifically, we prove \el does not satisfy \textit{scalability} and \textit{consistency}.
From the security analysis of~\cite{luu2016secure}, it follows that \el satisfies persistence and liveness in our system model.



\begin{thrm}\label{thm:el-consistency}
\el does not satisfy consistency in our system model.
\end{thrm}
\begin{proof}
Suppose a party submits two valid transactions, one spending input $x$ and another spending input $x$ and input $y$. Note that the second is a single transaction with two inputs. In this case, the probability that both hashes (transactions), $H(x,y)$ and $H(x)$, land in the same shard is $1/m$. Hence, the probability of a successful double-spending in a set of $T$ transactions is almost $1-(1/m)^T$, which converges to $0$ as $T$ grows, for any value $m>1$.
However, $m>1$ is necessary to satisfy scalability (Lemma \ref{thm:shards}).
Therefore, there will be almost surely a round in which two parties report two conflicting transactions. Since the final committee does not verify the validity of transactions but only checks the appropriate signatures are present, consistency is not satisfied.
\end{proof}

\begin{lem}\label{lem:elastico-ssf}
The communication and space factors of \el are $\omega_m=\Theta(n)$ and $\omega_s=\Theta(1)$.
\end{lem}
\begin{proof}
At the end of each epoch, which corresponds to the generation of one block per shard, the final committee broadcasts the final block to the entire network. All parties download and store the final block.
hence all parties maintain the entire input set of transactions. 
Since the block size is considered constant, downloading and storing the final block which consists of the hash-chains of all shards is equivalent to downloading and storing all the shards' ledgers. 
It follows that the space factor is $\omega_s=\Theta(1)$ as all parties store a constantly-compressed version of the input $T$, regardless of the nature of the input set $T$.
Similarly, it follows that the communication factor is $\omega_m=\Theta(n)$ as the broadcast of the final block takes place regularly at the generation of one block per shard, \ie, \el's epoch.\hfill \qed
\end{proof}

\begin{thrm}\label{thm:el-scale}
\el does not satisfy scalability in our system model.
\end{thrm}
\begin{proof}
Immediately follows from Definition \ref{def:scalability} and 
Lemma \ref{lem:elastico-ssf}.\hfill \qed
\end{proof}

\subsection{Monoxide}

\subsubsection{Overview.} 
Monoxide \cite{wang2019monoxide} is an asynchronous proof-of-work protocol, where the adversary controls at most $50\%$ of the computational power of the system.
The protocol uniformly partitions the space of user addresses into shards (zones) according to the first $k$ bits.
Every party is permanently assigned to a shard uniformly at random. Each shard employs the GHOST \cite{sompolinsky2015ghost} consensus protocol.

Participants are either full-nodes that verify and maintain the transaction ledgers, or miners investing computational power to solve PoW puzzles for profit in addition to being full-nodes. 
\mx introduces a new mining algorithm, called Chu-ko-nu, that enables miners to mine in parallel for all shards. 
The Chu-ko-nu algorithm aims to distribute the hashing power to protect individual shards from an adversarial takeover.
Successful miners include transactions in blocks. 
A block in \mx is divided into two parts: the chaining block that includes all metadata (Merkle root, nonce for PoW, etc.) creating the hash-chain, and the transaction-block that includes the list of transactions. All parties maintain the hash-chain of every shard in the system.

Furthermore, all parties maintain a distributed hash table for peer discovery and identifying parties in a specific shard. This way the parties of the same shard can identify each other and cross-shard transactions are sent directly to the destination shard.
Cross-shard transactions are validated in the shard of the payer and verified from the shard of the payee via a relay transaction and the hash-chain of the payer's shard.

\paragraph{\tt{Consensus:}} 
The consensus protocol of each shard is GHOST~\cite{sompolinsky2015ghost}. GHOST is a DAG-based consensus protocol similar to Nakamoto consensus~\cite{nakamoto2008bitcoin}, but the consensus selection rule is the heaviest subtree instead of the longest chain.

\paragraph{\tt{StatePartition:}} 
\mx is account-based hence all transactions are single input and single output.

\paragraph{\tt{CrossShard:}} 
An input shard is a shard that corresponds to the address of a sender of a transaction (payer) while an output shard one that corresponds to the address of a receiver of a transaction (payee).
Each cross-shard transaction is processed in the input shard, where an additional relay transaction is created and included in a block. The relay transaction consists of all metadata needed to verify the validity of the original transaction by only maintaining the hash-chain of a shard (\ie for light nodes).
The miner of the output shard verifies that the relay transaction is stable and then includes it in a block in the output shard. 
Note that in case of forks in the input shard, \mx invalidates the relay transactions and rewrites the affected transaction ledger to maintain consistency.

\paragraph{\tt{Sybil:}}
In a typical PoW election scheme, the adversary can create many identities and target its computational power to specific shards to gain control over more than half of the shard's participants. In such a case, the security of the protocol fails (both persistence and consistency properties do not hold). 
To address this issue, \mx introduces a new mining algorithm, Chu-ko-nu, that allows parallel mining on all shards.
Specifically, a miner can batch valid transactions from all shards and use the root of the Merkle tree of the list of chaining headers in the batch as input to the hash, alongside with the nonce (and some configuration data). Thus, when a miner successfully computes a hash lower than the target, the miner adds a block to every shard.

\paragraph{\tt{Divide2Shards:}} 
Parties are permanently  assigned to shards uniformly at random according the first $k$ bits of their address.

\paragraph{\tt{DRG:}} 
The protocol uses deterministic randomness (\eg\ hash function) and does not require any random source.

\paragraph{\tt{CompactState:}} 
No compaction of state is used in \mx.


\subsubsection{Analysis.}
We prove that \mx satisfies persistence, liveness, and consistency, but \textit{does not satisfy scalability}. 
The same result is also immediately derived from our impossibility result stated in Theorem ~\ref{thm:scale-SPV} as \mx demands each party to verify cross-shard transactions by acting as a light node to all shards; effectively demonstrating the effectiveness of our framework and the usability of our results.

\begin{restatable}{thrm}{mxperslive}
\mx satisfies persistence and liveness in our system model for $f<n/2$.
\end{restatable}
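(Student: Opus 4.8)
The plan is to establish persistence and liveness for \mx by reducing them to the per-shard consensus guarantees already assumed in our framework. Recall that by the definitions in Section~\ref{subsec:cryptoprop}, each shard internally runs a blockchain (consensus) protocol satisfying chain growth, chain quality, and common prefix. Since \mx uses GHOST~\cite{sompolinsky2015ghost} --- a Nakamoto-style DAG consensus --- as the intra-shard protocol, the first step is to invoke the known security analysis of GHOST (equivalently, the Nakamoto backbone analysis of Garay et al.~\cite{garay2015bitcoin}) to assert that each shard enjoys the common prefix and shard growth properties with appropriate parameters $k$, $\tau$, and $s$, provided each shard retains an honest computational majority.

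Next I would discharge that honest-majority premise. The subtle point specific to \mx is that a naive per-shard PoW election would let the adversary (controlling up to $50\%$ of total hash power) concentrate its mining on a single shard and overwhelm it. The argument must therefore appeal to the Chu-ko-nu mining mechanism: because a successful hash simultaneously appends a block to \emph{every} shard, the adversary cannot cheaply target an individual shard's consensus, and so each shard inherits a fraction of honest hashing power proportional to the global fraction. With this, the honest-majority condition for GHOST holds in every shard with high probability, and the per-shard common prefix property yields \emph{persistence}: once an honest party reports a transaction at depth $k$ in its shard, the common prefix property guarantees every honest party of that shard agrees on its position.

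For \emph{liveness}, the plan is to combine shard growth with chain quality. Shard growth ensures that over $u$ consecutive block-creation rounds at least $\tau \cdot u$ blocks are appended, and chain quality ensures a constant fraction $\mu$ of them are honest; hence a transaction continuously supplied as input to the honest parties of its shard will be included by some honest miner and, after $u$ blocks, pushed to depth $k$ and reported as stable. The one extra ingredient for \mx is the cross-shard relay mechanism: a cross-shard transaction is first stabilized in the input shard, then a relay transaction is emitted and included in the output shard. I would argue liveness still holds because stabilization in the input shard takes $O(u)$ blocks by the single-shard argument, and the relay transaction is then itself a valid input continuously available to the output shard, so a second application of shard growth and chain quality stabilizes it there as well, yielding overall liveness up to a constant blow-up in the wait-time parameter.

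The main obstacle I anticipate is making the honest-majority-per-shard claim rigorous under Chu-ko-nu mining, since this is where \mx departs from a plain sharded-PoW analysis and where the $f \leq n/2$ bound is actually consumed: one must argue that batched parallel mining genuinely prevents adversarial concentration so that the effective adversarial fraction in each shard stays below $1/2$ with high probability. A secondary subtlety is the interaction between the relay mechanism and forks --- \mx invalidates relay transactions and rewrites ledgers when the input shard forks --- so the liveness argument must confirm that, once the input-shard transaction is \emph{stable} (at depth $k$), the common prefix property rules out such rewrites affecting the relayed transaction, keeping the two-phase argument sound.
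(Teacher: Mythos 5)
Your proposal follows essentially the same route as the paper's proof: per-shard honest majority is obtained from the Chu-ko-nu mining mechanism (the paper simply cites Monoxide's own analysis, Section 5.3 of \cite{wang2019monoxide}, for this step rather than re-proving it), persistence then follows from the common prefix property of GHOST per shard --- including the same treatment of relay-transaction invalidation under forks, which the paper quantifies as a $(1-p)^{v}$ bound with $v$ constant --- and liveness follows from chain growth and chain quality applied first to the input shard and then to the relay transaction in the output shard. The approach and decomposition match; the only difference is that the paper discharges your anticipated ``main obstacle'' by citation rather than by a fresh argument.
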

\begin{proof}
From the analysis of \mx, it holds that if all honest miners follow the Chu-ko-nu mining algorithm, then honest majority within each shard holds with high probability for any adversary with $f<n/2$ (Section 5.3~\cite{wang2019monoxide}).

Assuming honest majority within shards, persistence depends on two factors: the probability a stable transaction becomes invalid in a shard's ledger, and the probability a cross-shard transaction is reverted after being confirmed. Both these factors solely depend on the common prefix property of the shards' consensus mechanism.
\mx employs GHOST as the consensus mechanism of each shard, hence the common prefix property is satisfied if we assume that invalidating the relay transaction does not affect other shards~\cite{kiayias2017trees}. Suppose common prefix is satisfied with probability $1-p$ (which is overwhelming on the ``depth'' security parameter $k$). Then, the probability none of the outputs of a transaction are invalidated is $(1-p)^{(v-1)}$ (worst case where $v-1$ outputs -- relay transactions -- link to one input). Thus, a transaction is valid in a shard's ledger after $k$ blocks with probability $(1-p)^v$, which is overwhelming in $k$ since $v$ is considered constant. Therefore, persistence is satisfied.

Similarly, liveness is satisfied within each shard. 
Furthermore, this implies liveness is satisfied for cross-shard transactions. In particular, both the initiative and relay transactions will be eventually included in the shards' transaction ledgers, as long as chain quality and chain growth are guaranteed within each shard~\cite{kiayias2017trees}.\hfill \qed
\end{proof}

\begin{restatable}{thrm}{mxconsistency}\label{mx:consistency}
\mx satisfies consistency in our system model for $f<n/2$.
\end{restatable}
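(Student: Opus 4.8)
The plan is to reduce consistency to the within-shard persistence that each shard inherits from its underlying consensus, and then show that Monoxide's relay mechanism never lets a cross-shard transaction create a conflict that escapes this per-shard guarantee. First I would unpack the conflict relation in the UTXO model: $M(tx_1,tx_2)=1$ forces $tx_1$ and $tx_2$ to share at least one input UTXO (a double-spend), since otherwise their union would be a valid ledger and the verification oracle $V$ would not flag them. The key structural observation about \mx is that every UTXO lives in exactly one shard --- the shard determined by the first $k$ bits of the owner's address --- and the \emph{spending} of a UTXO is recorded only in that shard (the input shard). Hence the shared input of any conflicting pair belongs to a single shard $x_i$, and both $tx_1$ and $tx_2$ must have their spend of that UTXO validated and committed in $x_i$.

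Next I would invoke persistence inside $x_i$. By the standing assumption each shard runs a blockchain protocol satisfying the common-prefix property (Definition~\ref{def:commonprefix}), from which persistence (Definition~\ref{def:persistence}) holds within $x_i$ with high probability. A valid shard ledger never contains two transactions spending the same UTXO, so at most one of $tx_1,tx_2$ can sit in the stable prefix of $x_i$: if an honest party reports one of them as stable at depth $k$, common prefix forces every honest party to agree on that prefix, and the conflicting transaction is rejected as invalid. This disposes of the purely intra-shard case.

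The hard part --- and the step I would treat most carefully --- is the cross-shard/relay mechanism, since a naive relay could credit an output shard on the basis of a spend that a fork in the input shard later reverts, producing a cross-shard inconsistency. I would argue that the depth-$k$ stability requirement built into Definition~\ref{def:atomic} is exactly what neutralizes this: the output shard includes a relay transaction only after the originating spend is stable (depth $k$) in the input shard, and by persistence of the input shard a depth-$k$ spend is not reverted except with negligible probability. Monoxide's fork-invalidation rule, which rewrites affected ledgers and cancels the relays of orphaned input-shard transactions, guarantees that any premature relay is cleaned up before it can be reported as stable; consequently no relay that witnesses a conflict can ever be reported at depth $k$ by an honest party.

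Combining the two cases, I would conclude that there is no round in which two honest parties report conflicting stable transactions, so \mx satisfies consistency with high probability. The main obstacle I anticipate is making the cross-shard argument airtight: I must pin down precisely that ``stable'' in the output shard is defined relative to the depth-$k$ confirmation of the relay's originating spend in the input shard, and then chain the two persistence guarantees (input shard plus output shard) so that the negligible failure probabilities compose to leave the overall conflict probability below the high-probability threshold fixed by the model.
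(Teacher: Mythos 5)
Your proof is correct and takes essentially the same route as the paper's: intra-shard conflicts are ruled out by the common-prefix property of GHOST, and cross-shard conflicts are handled by the relay mechanism (relays included only once the originating spend is stable in the input shard) together with \mx's fork-invalidation/ledger-rewrite rule, which restores consistency in the negligible-probability event of a deep fork. Your added observation that any conflicting pair must share an input UTXO living in a single input shard is a nice sharpening of the case analysis, but it does not change the substance of the argument.
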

\begin{proof}
The common prefix property is satisfied in GHOST~\cite{kiayias2017ouroboros}  with high probability. Thus, intra-shard transactions satisfy consistency with high probability (on the ``depth'' security parameter).
Furthermore, if a cross-shard transaction output is invalidated after its confirmation, \mx allows rewriting the affected transaction ledgers. Hence, consistency is restored in case of cross-transaction failure.
Thus, overall, consistency is satisfied in \mx.\hfill \qed
\end{proof}

Note that allowing to rewrite the transaction ledgers in case a relay transaction is invalidated strengthens the consistency property but weakens the persistence and liveness properties. 

Intuitively, to satisfy persistence in a sharded PoW system, the adversarial power needs to be distributed across shards. To that end, \mx employs a new mining algorithm, Chu-ko-nu, that incentivizes honest parties to mine in parallel on all shards. However, this implies that a miner needs to verify transactions on all shards and  maintain a transaction ledger for all shards. Hence, the computation and space factors are proportional to the number of (honest) participants and the protocol does not satisfy scalability.

\begin{thrm}\label{mx:scalability}\label{thm:mx-scale}
\mx does not satisfy scalability in our system model for $f<n/2$.
\end{thrm}

\begin{proof}
Let $m$ denote the number of shards (zones), $m_p$ the fraction of mining power running the Chu-ko-nu mining algorithm and $m_d$ the rest of the mining power ($m_p+m_d=1$). Additionally, suppose $m_s$ denotes the mining power of one shard. 
The Chu-ko-nu algorithm enforces the parties to verify transactions that belong to all shards, hence the parties store all sharded ledgers. 
To satisfy scalability, the space factor of \mx can be at most $o(1)$. Similarly, it follows that the verification overhead expressed through the computational factor must be bounded by $o(n)$. Thus, at most $o(n)$ parties can run the Chu-ko-nu mining algorithm, hence $n m_p=o(n)$.
We note that the adversary will not participate in the Chu-ko-nu mining algorithm as distributing the hashing power is to the adversary's disadvantage.  

To satisfy persistence, every shard running the GHOST protocol~\cite{sompolinsky2015ghost} must satisfy the common prefix property. Thus, the adversary cannot control more than $m_a < m_s/2$ hash power, where $m_s=\frac{m_d}{m}+m_p$. Consequently, we have $m_a < \frac{m_s}{2(m_d+m_p)} = \frac{1}{2}- \frac{m_d(m-1)}{2m(m_d+m_p)}$. For $n$ sufficiently large, $m_p$ converges to $0$; hence $m_a < \frac{1}{2}- \frac{(m-1)}{2m} = \frac{1}{2m}$. 
From Lemma \ref{thm:shards}, $m=\omega(1)$, thus the adversarial power $m_a < 0$ for sufficiently large $n$. 
We conclude that \mx does not satisfy scalability in our model. Moreover, we identify in \mx a clear trade-off  between security and scaling storage and verification. \hfill \qed
\end{proof}

\subsection{OmniLedger}

\subsubsection{Overview.}
OmniLedger \cite{kokoris2017omniledger} proceeds in epochs, assumes a partially synchronous model within each epoch (to be responsive), synchronous communication channels between honest parties (with a large maximum delay), and a slowly-adaptive computationally-bounded adversary that can corrupt up to $f<n/4$ parties.

The protocol bootstraps using techniques from ByzCoin~\cite{kogias2016enhancing}. The core idea is that there is a global identity blockchain that is extended once per epoch with Sybil resistant proofs (proof-of-work, proof-of-stake, or proof-of-personhood~\cite{borge2017pop}) coupled with public keys. 
At the beginning of each epoch a sliding window mechanism is employed to define the eligible validators as the ones with identities in the last $W$ blocks, where $W$ depends on the adaptivity of the adversary. For our definition of slowly adaptive, we set $W=1$.
The UTXO space is partitioned uniformly at random into $m$ shards, each shard maintaining its own ledger.

At the beginning of each epoch, a new common random value is created via a distributed randomness generation (DRG) protocol. The DRG protocol employs verifiable random functions (VRF) to elect a leader who runs RandHound \cite{syta2017scalable} to create the random value.
The random value is used as a challenge for the next epoch's identity registration and as a seed to assigning identities of the current epoch into shards.

Once the participants for this epoch are assigned to shards and bootstrap their internal states, they start validating transactions and updating the shards' transaction ledgers by operating ByzCoinX, a modification of ByzCoin~\cite{kogias2016enhancing}.
When a transaction is cross-shard, a protocol that ensures the atomic operation of transactions across shards called \textit{Atomix} is employed. Atomix is a client-driven atomic commit protocol secure against Byzantine adversaries. 

\paragraph{\tt{Consensus:}}
\ol suggests the use of a strongly consistent consensus in order to support Atomix. This modular approach means that any consensus protocol~\cite{castro1999practical,kogias2016enhancing,ren2017practical,gilad2017algorand,kokoris2019robust} works with \ol as long as the deployment setting of \ol respects the limitations of the consensus protocol. 
In its experimental deployment, \ol uses a variant of ByzCoin~\cite{kogias2016enhancing} called ByzCoinX~\cite{kokoris2019robust} in order to maintain the scalability of ByzCoin and be robust as well. We omit the details of ByzCoinX as it is not relevant to our analysis.

\paragraph{\tt{StatePartition:}} 
The UTXO space is partitioned uniformly at random into $m$ shards.

\paragraph{\tt{CrossShard (Atomix):}} 
Atomix is a client-based adaptation of two-phase atomic commit protocol running with the assumption that the underlying shards are correct and never crash. This assumption is satisfied because of the random assignment of parties to shards, as well as the Byzantine fault-tolerant consensus of each shard. 

In particular, Atomix works in two steps: First, the client that wants the transaction to go through requests a proof-of-acceptance or proof-of-rejection from the shards managing the inputs, who log the transactions in their internal blockchain. Afterwards, the client either collects proof-of-acceptance from all the shards or at least one proof-of-rejection. In the first case, the client communicates the proofs to the output shards, who verify the proofs and finish the transaction by generating the necessary UTXOs. In the second case, the client communicates the proofs to the input shards who revert their state and abort the transaction. 
Atomix, has a subtle replay attack, hence we analyze \ol with the proposed fix~\cite{sonnino2019replay}.

\paragraph{\tt{Sybil:}}
A global identity blockchain with Sybil resistant proofs coupled with public keys is extended once per epoch. 

\paragraph{\tt{Divide2Shards:}} 
Once the parties generate the epoch randomness, the parties can independently compute the shard they are assigned to for this epoch by permuting ($mod$ n) the list of validators (available in the identity chain).

\paragraph{\tt{DRG:}} 
The DRG protocol consists of two steps to produce unbiasable randomness. On the first step, all parties evaluate a VRF using their private key and the randomness of the previous round to generate a ``lottery ticket''. Then the parties broadcast their ticket and wait for $\Delta$ to be sure that they receive the ticket with the lowest value whose generator is elected as the leader of RandHound. 

This second step is a partially-synchronous randomness generation protocol, meaning that even in the presence of asynchrony safety is not violated. If the leader is honest, then eventually the parties will output an unbiasable random value, whereas if the leader is dishonest there are no liveness guarantees.
To recover from this type of fault the parties can view-change the leader and go back to the first step in order to elect a new leader. 

This composition of randomness generation protocols (leader election and multiparty generation) guarantees that all parties agree on the final randomness (due to the view-change) and the protocol remains safe in asynchrony. Furthermore, if the assumed synchrony bound (which can be increasing like PBFT~\cite{castro1999practical}) is correct, an honest leader will be elected in a constant number of rounds.

Note, however, that the DRG protocol is modular, thus any other scalable distributed randomness generation protocol with similar guarantees, such as Hydrand~\cite{schindler2018hydrand} or Scrape~\cite{cascudo2017scrape}, can be used.

\paragraph{\tt{CompactState:}} 
A key component that enables \ol to scale is the epoch transition. At the end of every epoch, the parties run consensus on the state changes and append the new state (\eg\ UTXO pool) in a state-block that points directly to the previous epoch's state-block. This is a classic technique~\cite{castro1999practical} during reconfiguration events of state machine replication algorithms called checkpointing. 
New validators do not replay the actual shard's ledger but instead, look only at the checkpoints which help them bootstrap faster. 

In order to guarantee the continuous operation of the system, after the parties finish the state commitment process, the shards are reconfigured in small batches (at most $1/3$ of the parties in each shard at a time). If there are any blocks committed after the state-block, the validators replay the state-transitions directly.

\subsubsection{Analysis.}
In this section, we prove \ol satisfies persistence, consistency, and scalability (on expectation) but \textit{fails to satisfy liveness}. Nevertheless, we estimate the efficiency of \ol by providing an upper bound on its throughput factor.

\begin{restatable}{lem}{omnirandomsecure}\label{omni:random-secure}
At the beginning of each epoch, \ol provides an unbiased, unpredictable, common to all parties random value (with overwhelming probability in $t$ within $t$ rounds). 
\end{restatable}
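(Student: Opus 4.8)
The plan is to decompose the claim into its three security guarantees---\emph{unbiasability}, \emph{unpredictability}, and \emph{agreement} (the ``common to all parties'' part)---together with the liveness bound hidden in the phrase ``within $t$ rounds'', and to reduce each to the security of the two building blocks of \ol's DRG protocol: the VRF-based leader election and the RandHound instance that the elected leader drives. First I would fix an epoch and condition on the previous epoch's randomness $\rho$, so that every party's lottery ticket $\mathrm{VRF}_{sk}(\rho)$ is a well-defined value. I treat the VRF outputs as uniform, independent, unpredictable-before-revelation, and unbiasable (deterministic given $sk$ and $\rho$), which are exactly the standard VRF guarantees and the reason the election cannot be ground by a slowly-adaptive adversary whose corrupted key set is fixed for the epoch.

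The core probabilistic step is to bound the chance that the elected leader is honest. Since the leader is the party holding the globally smallest ticket and tickets are i.i.d.\ uniform, the minimum is held by an honest party with probability equal to the honest fraction, which is at least $1-f/n > 3/4$ under the assumption $f < n/4$. Because all honest parties broadcast their tickets and, by the synchronous channel assumption, receive the same ticket set within $\Delta$, they agree on the identity of the minimum-ticket holder; this already yields the ``common'' guarantee at the level of leader selection. I would then invoke RandHound~\cite{syta2017scalable}: conditioned on an honest leader (and the honesty threshold guaranteed by the random assignment), RandHound outputs a value that is unbiasable, unpredictable to everyone before the shares are opened, and identical for all honest participants. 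Thus a single round with an honest leader discharges all three security guarantees simultaneously.

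It remains to handle dishonest leaders and to turn the per-round success probability into the ``with overwhelming probability within $t$ rounds'' statement. Here I would use the partial-synchrony safety of RandHound: a dishonest leader can stall (produce no output), but can never cause honest parties to accept a biased value, so the only adversarial power is over liveness. When the leader is dishonest the parties view-change and re-run the election, each attempt electing an honest leader independently with probability $\geq 3/4$. Hence the number of rounds until the first honest leader is stochastically dominated by a geometric random variable, and the probability that no honest leader is elected within $t$ rounds is at most $(f/n)^t \le (1/4)^t$, which is negligible for $t=\omega(1)$; choosing $t$ logarithmic in the security parameter gives the claimed overwhelming probability.

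The main obstacle I anticipate is the bias-resistance argument \emph{across} view-changes, rather than the single-round case. I must rule out a ``grinding'' strategy in which the adversary, by deliberately becoming leader and then aborting, repeatedly forces view-changes until a draw it prefers arises. The crux is that an adversarial leader's only admissible outcomes are ``correct unbiased value'' or ``no value'' (RandHound safety), so aborting discards the round entirely rather than resampling an output the adversary can compare; combined with the fact that the honest leader eventually reached produces a value determined solely by honest secret inputs and independent of the adversary's abort pattern, no sequence of forced view-changes shifts the output distribution. Making this independence precise---formally decoupling the adversary's liveness attacks from the distribution of the finally accepted value---is the delicate part of the proof.
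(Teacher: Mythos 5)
Your proposal is correct and follows essentially the same route as the paper's proof: a case split on whether the elected leader is honest, the observation that a dishonest leader can only stall but never bias (RandHound safety), geometric amplification of the per-round honest-leader probability to get the ``within $t$ rounds'' bound, and inheritance of unpredictability from RandHound. The only differences are cosmetic refinements---you use the tighter per-round honest-leader probability $1-f/n>3/4$ where the paper conservatively uses $1/2$ (yielding $1-(1/4)^t$ versus $1-2^{-t}$), and you spell out the anti-grinding/view-change independence argument that the paper leaves implicit.
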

\begin{proof}
If the elected leader that orchestrates the distributed randomness generation protocol (RandHound or equivalent) is honest the statement holds. 
On the other hand, if the leader is Byzantine, the leader cannot affect the security of the protocol, meaning the leader cannot bias the random value. However, a Byzantine leader can delay the process by being unresponsive. We show that there will be an honest leader, hence the protocol will output a random value, with overwhelming probability in the number of rounds $t$.

The adversary cannot pre-mine PoW puzzles, because the randomness of each epoch is used in the PoW calculation of the next epoch. Hence, the expected number of identities the adversary will control (number of Byzantine parties) in the next epoch is $f<n/4$. Hence, the adversary will have the smallest ticket -- output of the VRF -- and thus will be the leader that orchestrates the distributed randomness generation protocol (RandHound) with probability $1/2$.
Then, the probability there will be an honest leader in $t$ rounds is $1-\frac{1}{2^t}$, which is overwhelming in $t$. 

The unpredictability is inherited by the properties of the employed distributed randomness generation protocol.
\hfill \qed
\end{proof}

\begin{restatable}{lem}{omnirandomscale}\label{omni:random-scale-lemma}
The distributed randomness generation protocol has  $O(\frac{n\log^2 n}{R})$ amortized communication complexity, where $R$ is the number of rounds in an epoch.
\end{restatable}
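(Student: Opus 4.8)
The plan is to bound the total communication of a single execution of the distributed randomness generation (DRG) protocol and then amortize it over the epoch. Since the DRG is invoked exactly once at the start of each epoch and an epoch comprises $R$ consecutive rounds, the amortized per-round communication is simply the one-shot cost divided by $R$; establishing a one-shot cost of $O(n\log^2 n)$ therefore yields the claim. I would decompose the protocol into the two composable phases of Lemma~\ref{omni:random-secure}, namely the VRF-based leader election and the RandHound~\cite{syta2017scalable} multiparty generation, bound each, and show that RandHound dominates.

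For the leader-election phase, every party evaluates its VRF once and broadcasts a single constant-size lottery ticket; counting one message per party, this phase originates $O(n)$ messages, which is dominated by the RandHound term below. I would also dispatch the liveness caveat of Lemma~\ref{omni:random-secure}: a corrupt elected leader triggers a view-change and a fresh election/RandHound run, but since a constant fraction of parties is honest the elected leader is honest with constant probability, so in expectation only $O(1)$ view-changes occur. Hence the expected cost of the combined protocol stays within a constant factor of a single honest-leader execution.

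The core of the argument is bounding RandHound. RandHound partitions the $n$ servers into $n/c$ groups of size $c$ and runs a client-coordinated PVSS inside each group: each of the $c$ members of a group deals a transcript of $O(c)$ encrypted shares with proofs, and disseminating and verifying it within the group costs $O(c^2)$ per dealer, hence $O(c^3)$ per group and $O(c^2 n)$ in total. It then remains to pin down $c$. For the protocol to stay bias-resistant and live, every group must retain an honest (super)majority with high probability; by the same Chernoff-bound and union-bound argument used for shard honesty in Lemma~\ref{lem:shard-size}, ensuring this simultaneously for all $n/c$ groups against an adversary corrupting a constant fraction $p=f/n$ forces $c=\Omega(\log n)$. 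Instantiating groups at the smallest secure size $c=\Theta(\log n)$ gives RandHound communication $O(c^2 n)=O(n\log^2 n)$, dominating the $O(n)$ leader-election cost.

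Combining the phases, a single DRG run costs $O(n\log^2 n)$ (in expectation, absorbing the $O(1)$ view-changes), and dividing by the $R$ rounds over which it is amortized yields $O(\frac{n\log^2 n}{R})$. I expect the main obstacle to be the RandHound bookkeeping, specifically justifying $c=\Theta(\log n)$ as simultaneously necessary for the per-group honesty guarantee and sufficient to realize the $O(c^2 n)$ design; the leader-election bound and the amortization over the epoch are comparatively routine.
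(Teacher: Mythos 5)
Your proposal is correct and follows essentially the same route as the paper's proof: RandHound costs $O(c^2 n)$, the per-group honest-majority requirement forces $c=\Omega(\log n)$ via the Chernoff/union-bound argument of Lemma~\ref{lem:shard-size} (contradicting RandHound's claim that $c$ is constant), giving $O(n\log^2 n)$ per epoch and hence $O(\frac{n\log^2 n}{R})$ amortized. Your additional bookkeeping for the VRF leader-election cost and the expected $O(1)$ view-changes is sound extra detail that the paper's proof leaves implicit, but it does not change the core argument.
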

\begin{proof} 
The DRG protocol inherits the communication complexity of RandHound, which is $O(c^2n)$ \cite{schindler2018hydrand}. 
In \cite{syta2017scalable}, the authors claim that $c$ is constant. However, the protocol requires a constant fraction of honest parties (\eg\ $n/3$) in each of the $n/c$ partitions of size $c$ against an adversary that can corrupt a constant fraction of the total number of parties (\eg\ $n/4$). 
Hence, from Lemma \ref{lem:shard-size}, we have $c=\Omega(\log n)$, which leads to communication complexity $O(n\log^2n)$ for each epoch.
Assuming each epoch consist of $R$ rounds, the amortized per round communication complexity is $O(\frac{n\log^2 n}{R})$.
\hfill \qed
\end{proof}

\begin{restatable}{cor}{omnishardsize}\label{omni:shard-size}
In each epoch, the expected size of each shard is $n/m$.
\end{restatable}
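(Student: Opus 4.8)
The plan is to treat the epoch's shard assignment as a uniform random allocation of the $n$ parties to the $m$ shards, and then compute the expected load of a fixed shard by linearity of expectation. The key enabling fact is Lemma~\ref{omni:random-secure}: at the start of each epoch \ol produces an unbiased, unpredictable random value common to all parties. Since every party derives its shard from this single unbiased seed (by permuting the validator list of the identity chain and reducing modulo the shard count), the induced assignment of any fixed party to a shard is uniform over $[m]$, and because the seed is unpredictable to the adversary, the adversary cannot bias which honest parties land where. This is exactly the independent-and-uniform assignment model already assumed in Section~\ref{subsec:random-parties} and used by Lemma~\ref{lem:shard-size}.

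First I would fix a shard $i \in [m]$ and, for each party $j \in [n]$, introduce the indicator random variable $X_{ij}$ equal to $1$ if party $j$ is assigned to shard $i$ and $0$ otherwise. By the uniformity argued above, $\mathbb{E}[X_{ij}] = \Pr[X_{ij}=1] = 1/m$ for every $j$. The size of shard $i$ is then the sum $S_i = \sum_{j=1}^{n} X_{ij}$.

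Then, applying linearity of expectation (which needs no independence among the $X_{ij}$, so the argument is valid whether the assignment is drawn independently per party or via a single random permutation of the validator list into contiguous blocks), I would conclude
\[
\mathbb{E}[S_i] = \sum_{j=1}^{n} \mathbb{E}[X_{ij}] = \sum_{j=1}^{n} \frac{1}{m} = \frac{n}{m}.
\]
Since $i$ was arbitrary, every shard has expected size $n/m$.

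There is no substantial technical obstacle here; the statement is an immediate consequence of linearity of expectation. The only point requiring care, which I would make explicit, is justifying the uniformity of the per-party assignment from the properties of the randomness beacon: Lemma~\ref{omni:random-secure} guarantees the seed is both unbiased (so each shard is equally likely) and unpredictable (so the adversary cannot skew the distribution of honest parties across shards before assignment). Once uniformity is in hand the computation is one line, and this same expectation is precisely what the concentration bound of Lemma~\ref{lem:shard-size} later sharpens into the high-probability balanced, $a$-honest shard guarantee.
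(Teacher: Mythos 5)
Your proposal is correct and follows essentially the same route as the paper: both invoke Lemma~\ref{omni:random-secure} to conclude that the $n$ parties are assigned uniformly at random to the $m$ shards, and then read off the expected shard size of $n/m$. The paper's proof simply asserts this last step, while you spell it out with indicator variables and linearity of expectation (noting correctly that no independence is needed), which is a harmless elaboration of the same argument.
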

\begin{proof}
Due to Lemma \ref{omni:random-secure}, the $n$ parties are assigned independently and uniformly at random to $m$ shards. Hence, the expected number of parties in a shard is $n/m$.
\hfill \qed
\end{proof}

\begin{restatable}{lem}{omnihonestshards}\label{omni:honest-shards}
In each epoch, all shards are $\frac{1}{3}$-honest for $m\leq f(n)$ with $f(n)$ as described in corollary~\ref{cor:upper-bound}.
\end{restatable}
\begin{proof}
Due to Lemma \ref{omni:random-secure}, the $n$ parties are assigned independently and uniformly at random to $m$ shards. Since $a=1/3 > p=1/4$, both $a,p$ constant, the statement holds from Lemma~\ref{lem:shard-size} 
and corollary~\ref{cor:upper-bound}.
\hfill \qed
\end{proof}
Note that the bound is theoretical and holds for a  large number of parties since the probability tends to $1$ as the number of parties grows. For practical bounds, we refer to \ol's analysis~\cite{kokoris2017omniledger}.

\begin{restatable}{thrm}{omnipersistence}\label{omni:persistence}
\ol satisfies persistence in our system model for $f<n/4$.
\end{restatable}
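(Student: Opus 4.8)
The plan is to reduce the persistence of \ol to the common prefix property (Definition~\ref{def:commonprefix}) of the per-shard consensus protocol ByzCoinX, and then to lift this within-epoch guarantee across epoch boundaries using the checkpointing mechanism.

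First I would establish that within any single epoch every shard satisfies common prefix. ByzCoinX is a strongly-consistent BFT consensus that preserves safety (finality) as long as fewer than $1/3$ of the shard's parties are Byzantine. Lemma~\ref{omni:honest-shards} guarantees that, for $m \leq \frac{n}{300c\ln n}$, all shards are $\frac{1}{3}$-honest with high probability, and Lemma~\ref{omni:random-secure} ensures the shard assignment is derived from an unbiased, unpredictable common random value, so a slowly-adaptive adversary corrupting $f<n/4$ parties cannot concentrate its corruptions inside one shard. Combining these, each shard meets the threshold required for ByzCoinX's safety, hence the common prefix property holds in every shard throughout the epoch. Since persistence is a \emph{safety} property and BFT consensus preserves safety even under the partially-synchronous network assumed within an epoch, this argument does not depend on timing assumptions.

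Second, within-epoch persistence follows directly from common prefix. If an honest party reports a transaction $tx$ as stable (at depth $k$) in a shard, then $tx$ lies in the prefix common to all honest parties of that shard, so whenever any honest party reports $tx$ it appears in the same block position. This also covers cross-shard transactions, since the proof-of-acceptance and proof-of-rejection entries logged by Atomix in the input and output shards are committed through the very same per-shard consensus and inherit the same guarantee.

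The main obstacle, and the step requiring the most care, is persistence \emph{across} epoch transitions, where parties are reassigned to shards and bootstrap from checkpoints rather than replaying the full history. Here I would rely on the epoch-transition mechanism: at the end of each epoch the parties run consensus on the state changes and commit a state-block pointing to the previous epoch's state-block, while any blocks committed after the state-block are replayed directly. Because this state-block is itself agreed upon by a $\frac{1}{3}$-honest committee via consensus, the committed stable state, and in particular the positions of all stable transactions, is faithfully propagated to the new validators. Thus a transaction stable in epoch $e$ retains its position when reported in epoch $e+1$, and persistence holds globally. The delicate point is ensuring that the checkpoint captures exactly the stable (depth-$k$) prefix and that no honest new validator adopts an inconsistent checkpoint; this reduces once more to the safety of the checkpointing consensus, which again follows from the $\frac{1}{3}$-honest property.
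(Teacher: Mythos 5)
Your proposal is correct and follows essentially the same route as the paper's proof: Lemma~\ref{omni:honest-shards} gives $\frac{1}{3}$-honest shards with high probability, hence ByzCoinX's safety (common prefix, indeed finality with depth parameter $k=1$) holds in every shard, and persistence follows. The paper's proof stops there; your additional treatment of cross-shard entries and of epoch transitions via the checkpointing state-blocks is a sound elaboration of the same argument rather than a different approach.
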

\begin{proof}
From Lemma \ref{omni:honest-shards}, each shard has an honest supermajority $\frac{2}{3}\frac{n}{m}$ of participants. Hence, persistence holds by the common prefix property of the consensus protocol of each shard. Specifically, for ByzCoinX, persistence holds for depth parameter $k=1$ because ByzCoinX guarantees finality. 
\hfill \qed
\end{proof}

\begin{thrm}\label{omni:liveness}
\ol does not satisfy liveness in our system model for $f<n/4$.
\end{thrm}
\begin{proof}
To estimate the liveness of the protocol, we need to examine  all the subprotocols: (i) {\tt Consensus}, (ii) {\tt CrossShard} or Atomix, (iii) {\tt DRG}, (iv) {\tt CompactState}, and (v) {\tt Divide2Shards}.

{\tt Consensus:} From Lemma \ref{omni:honest-shards}, each shard has an honest supermajority $\frac{2}{3}\frac{n}{m}$ of participants. Hence, in this stage liveness holds by chain growth and chain quality properties of the underlying blockchain protocol (an elaborate proof can be found in \cite{garay2015bitcoin}). 
The same holds for {\tt CompactState} as it is executed similarly to {\tt Consensus}.

{\tt CrossShard:} Atomix guarantees liveness since the protocol's efficiency depends on the consensus of each shard involved in the cross-shard transaction.
Note that liveness does not depend on the client's behavior; if the appropriate information or some part of the transaction is not provided in multiple rounds to the parties of the protocol then the liveness property does not guarantee the inclusion of the transaction in the ledger. Furthermore, if some other party wants to continue the process it can collect all necessary information from the ledgers of the shards.

{\tt DRG:} During the epoch transition, the DRG protocol provides a common random value with overwhelming probability within $t$ rounds (Lemma \ref{omni:random-secure}). Hence, liveness is satisfied in this subprotocol as well.

{\tt Divide2Shrds:} Liveness is not satisfied in this protocol. The reason is that a slowly-adaptive adversary can select who to corrupt during epoch transition, and thus can corrupt a shard from the previous epoch. Since the compact state has not been disseminated in the network, the adversary can simply delete the shard's state. Thereafter, the data unavailability prevents the progress of the system. 
\hfill \qed
\end{proof}

\begin{thrm}\label{omni:consistency}
\ol satisfies consistency in our system model for $f<n/4$.
\end{thrm}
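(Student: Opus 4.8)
The plan is to prove that \ol satisfies consistency by showing that the combination of the random shard assignment, the Byzantine-fault-tolerant consensus per shard, and the (fixed) Atomix cross-shard commit protocol prevents two honest parties from ever reporting conflicting stable transactions. Recall that by Definition~\ref{def:atomic}, I must rule out any round $r$ in which honest parties $P_1,P_2$ report stable (depth-$k$) transactions $tx_1,tx_2$ with $M(tx_1,tx_2)=1$. Since conflicting transactions must share a spent UTXO input, the core of the argument is to show that a single UTXO cannot be committed as an input to two different transactions across the system.

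First I would invoke Lemma~\ref{omni:honest-shards} (together with Corollary~\ref{cor:shard-size}) to establish that, with high probability, every shard is $\tfrac13$-honest in each epoch, so each shard's underlying consensus (ByzCoinX) satisfies the common-prefix property with its security parameter $k$; this already gives intra-shard consistency, since two conflicting transactions spending the same UTXO inside a single shard would violate the validity check performed by the honest supermajority before commitment. Second, I would handle the cross-shard case, which is the heart of the statement. Each UTXO lives in exactly one input shard, and by the $\tfrac13$-honest property that shard behaves correctly: it can issue \emph{either} a proof-of-acceptance \emph{or} a proof-of-rejection for a given transaction spending that UTXO, but the honest supermajority will not log the same UTXO as spent twice. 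I would argue that Atomix, being a two-phase atomic commit running over correct shards, guarantees that an output shard only generates new UTXOs after verifying proofs-of-acceptance from \emph{all} input shards; hence a transaction commits in all its shards or aborts in all of them, which is exactly the atomicity that consistency demands.

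The main obstacle, and the step I expect to require the most care, is ruling out the replay attack: as the paper notes, Atomix has a subtle replay vulnerability~\cite{sonnino2019replay} whereby a previously issued proof-of-acceptance could be reused to double-spend an already-consumed UTXO. I would therefore explicitly invoke the fixed version of Atomix (which the overview states is the one being analyzed) and argue that the fix binds each proof-of-acceptance to a unique transaction identifier, so that a logged input cannot be re-consumed; this closes the only avenue by which two committed transactions could conflict. I would then combine the intra-shard guarantee, the inter-shard atomic-commit guarantee, and the replay-resistance into a single union-bound argument: the probability that \emph{any} shard fails to be $\tfrac13$-honest in a given epoch is at most $1/n^c$ by Lemma~\ref{omni:honest-shards}, and conditioned on all shards being honest, no pair of conflicting stable transactions can arise. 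Hence consistency holds with high probability, completing the proof.

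A secondary point I would address for completeness is the interaction with persistence: since stability is defined relative to depth $k$, I must ensure that once a transaction is reported stable in its shard, the common-prefix property prevents it from being reverted in a way that would release its input UTXO for reuse. This ties consistency back to Theorem~\ref{omni:persistence}, and I would note that the epoch-transition checkpointing mechanism preserves committed state across reconfigurations, so the argument extends across epoch boundaries rather than only within a single epoch.
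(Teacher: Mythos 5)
Your proposal is correct and follows essentially the same route as the paper's proof: intra-shard consistency from Lemma~\ref{omni:honest-shards} ($\frac{1}{3}$-honest shards), cross-shard atomicity from Atomix, and prevention of chain reverts via persistence (Theorem~\ref{omni:persistence}), with the overall guarantee holding with high probability. Your explicit treatment of the replay-attack fix and your union-bound accounting are slightly more detailed than the paper's (which folds the probability argument into $1-p^v$ over the $v$ relevant shards), but the decomposition and key ingredients are the same.
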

\begin{proof}
Each shard is $\frac{1}{3}$-honest (Lemma \ref{omni:honest-shards}). Hence, consistency holds within each shard, and the adversary cannot successfully double-spend. 
Nevertheless, we need to guarantee consistency even when transactions are cross-shard. \ol employs Atomix, a protocol that guarantees cross-shard transactions are atomic. Thus, the adversary cannot validate two conflicting transactions across different shards. 

Moreover, the adversary cannot revert the chain of a shard and double-spend an input of a cross-shard transaction after the transaction is accepted in all relevant shards because persistence holds (Theorem \ref{omni:persistence}). Suppose persistence holds with probability $p$. Then, the probability the adversary breaks consistency in a cross-shard transaction is the probability of successfully double-spending in one of the relevant to the transaction shards, $1-p^v$, where $v$ is the average size of transactions. Since $v$ is constant, consistency holds with high probability, given that persistence holds with high probability.
\hfill \qed
\end{proof}

To prove \ol satisfies scalability (on expectation) we need to evaluate the scaling factors in the following subprotocols of the system: (i) {\tt Consensus}, (ii) {\tt CrossShard}, (iii) {\tt DRG}, and (iv) {\tt Divide2Shards}.
Note that {\tt CompactState} is merely an execution of {\tt Consensus}.

\begin{restatable}{lem}{omnishardscale}\label{omni:shard-scale}
The scaling factors of {Consensus} are $\omega_m=O({n}/{m})$, $\omega_s=O({1}/{m})$, and $\omega_c=O({n}/{m})$.
\end{restatable}
\begin{proof}
From Corollary \ref{omni:shard-size}, the expected number of parties in a shard is $n/m$. ByzCoin has quadratic to the number of parties' worst-case communication complexity, hence  the communication  factor of the protocol is $O(n/m)$. 
The verification complexity collapses to the communication complexity. 
The space factor is $O({1}/{m})$, as each party maintains the ledger of the assigned shard for the epoch. \hfill \qed
\end{proof}

\begin{restatable}{lem}{omniatomixscale}\label{omni:atomix-scale}
The communication factor of Atomix ({CrossShard}) is $\omega_m=O(v\frac{n}{m})$, where $v$ is the average size of transactions.
\end{restatable}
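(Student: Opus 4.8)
The plan is to show that, restricted to the Atomix subprotocol, the three overhead factors satisfy $\max(\omega_m,\omega_s,\omega_c)=O(vn/m)$ in expectation; substituting this into $\Sigma=n/\max(\omega_m,\omega_s,\omega_c)$ then yields $\mathbb{E}(\Sigma)=\Omega(m/v)$. The single observation driving every bound is that a cross-shard transaction touches only $O(v)$ shards on expectation. By Corollary~\ref{omni:shard-size} the shards are balanced with $O(n/m)$ parties each, the UTXO space is partitioned uniformly at random, and a transaction has on average $v$ inputs and outputs, so its inputs and outputs fall into at most $v$ (and in expectation $m(1-(1-1/m)^v)=O(v)$ distinct) shards. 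Consequently all the work that Atomix generates for a single transaction is confined to $O(v)$ shards of size $O(n/m)$.

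Next I would bound the three factors in turn. For space, in Atomix each input shard logs the transaction together with its proof-of-acceptance or proof-of-rejection in its internal blockchain, and each output shard logs the transaction after verifying the collected proofs; all $O(n/m)$ parties of each of the $O(v)$ relevant shards store this data. Summing over the $|T|$ transactions, which are all cross-shard by Lemma~\ref{lem:cross-shard}, the total stored data is $O(|T|\cdot v\cdot n/m)$, hence $\omega_s=O(vn/m)$. The computational factor follows identically: for each transaction the verification oracle $V$ is invoked by the parties of the $O(v)$ shards that run consensus on it, i.e.\ $O(v\cdot n/m)$ calls per transaction, so $\omega_c=O(vn/m)$. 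For communication I would either invoke the paper's gossip-collapse assumption, giving $\omega_m=O(\omega_s)=O(vn/m)$, or argue directly by applying Lemma~\ref{omni:shard-scale} to each of the $O(v)$ shards whose consensus is triggered, so that the per-transaction communication is $O(v)$ times the single-shard overhead $O(n/m)$. Combining the three bounds gives $\max(\omega_m,\omega_s,\omega_c)=O(vn/m)$, and by linearity of expectation over the random draw of $T\sim D_T$ and the random UTXO partition the expectation of each factor is $O(vn/m)$, whence $\mathbb{E}(\Sigma)=\Omega(m/v)$.

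The main obstacle will be the bookkeeping around the client-driven two-phase structure of Atomix: I must verify that the client's own communication and the two rounds of proof gathering and dispatch do not introduce a factor larger than the $O(v)$ shards already accounted for, and that this coordination overhead never asymptotically dominates the intra-shard consensus cost bounded in Lemma~\ref{omni:shard-scale}. I would also have to argue the expectation carefully, since $v$ is an \emph{average} transaction size rather than a fixed count, so the number of distinct shards a transaction hits is a random variable; controlling its expectation (including the case where two inputs collide in the same shard) is precisely where the $O(v)$ estimate must be justified rather than assumed.
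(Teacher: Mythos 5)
Your proposal is correct and follows essentially the same route as the paper's proof: both bound the three overhead factors of Atomix by $O(v\frac{n}{m})$ using the facts that a transaction involves $O(v)$ shards of expected size $O(n/m)$, that verification of inputs is proof-based (so parties need not hold other shards' ledgers), and that the intra-shard consensus cost is covered by Lemma~\ref{omni:shard-scale}, concluding $\Sigma=\Omega(\frac{m}{v})$. The only cosmetic difference is that the paper explicitly credits persistence (Theorem~\ref{omni:persistence}) for why output-shard parties can verify inputs without storing input-shard state, a justification you use implicitly via the proofs-of-acceptance.
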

\begin{proof}
In a cross-shard transaction, Atomix allows the participants of the output shards to verify the validity of the transaction's inputs without maintaining any information on the input shards' ledgers. This holds due to persistence (see Theorem \ref{omni:persistence}). 

Furthermore, the verification process requires each input shard to verify the validity of the transaction's inputs and produce a proof-of-acceptance or proof-of-rejection. This corresponds to one query to the verification oracle for each input.
In addition, each party of an output shard must verify that all proofs-of-acceptance are present and no shard rejected an input of the cross-shard transaction. 
The proof-of-acceptance (or rejection) consists of the signature of the shard which is linear to the number of parties in the shard. 
The relevant parties have to receive all the information related to the transaction from the client (or leader), hence the communication factor is $O(v\frac{n}{m})$.

So far, we considered the communication complexity of Atomix. However, each input must be verified within the corresponding input shard. From Lemma \ref{omni:shard-scale}, we get that the communication factor at this step is $O(v\frac{n}{m})$. 
\end{proof}

\begin{restatable}{lem}{omniepochscale}\label{omni:epoch-scale}
The communication factor of {Divide2Shards} is $\omega_m=O(\frac{n}{mR})$, while the space factor is $\omega_s=O(1/R)$, where $R$ is size of an epoch. 
\end{restatable}
\begin{proof}
During the epoch transition each party is assigned to a shard uniformly at random and thus most probably needs to bootstrap to a new shard, meaning the party must store the new shard's ledger. At this point, within each shard \ol introduces checkpoints, the state blocks that summarize the state of the ledger ({CompactState}). Therefore, when a party syncs with a shard's ledger, it does not download and store the entire ledger but only the active UTXO pool corresponding to the previous epoch's state block.

For security reasons, each party that is reassigned to a new shard must receive the state block of the new shard by $O(n/m)$ parties. Thus, the communication complexity of the protocol is $O(\frac{n}{mR})$ amortized per round, where $R$ is the number of rounds in an epoch.

The space complexity is constant but amortized over the epoch length since the state block has a constant size and is broadcast once per epoch, $\omega_s=O(1/R)$. There is no verification process at this stage.



\hfill \qed
\end{proof}

\begin{thrm}\label{omni:scalability}
\ol satisfies scalability in our system model for  $f<n/4$  with communication and computational factor $O(n/m)$ and space factor $O(1/m)$, where $n=O(m \log m)$.
\end{thrm}
\begin{proof}
To evaluate the scalability of \ol, we need to estimate the dominating scaling factors of all the subprotocols of the system:
(i) {\tt Consensus}, (ii) {\tt CrossShard}, (iii) {\tt DRG}, and (iv) {\tt Divide2Shards}.

The scaling factors of {Consensus} are $\omega_m=O({n}/{m})$, $\omega_s=O({1}/{m})$, and $\omega_c=O({n}/{m})$ (Lemma~\ref{omni:shard-scale}),  while Atomix ({\tt CrossShard}) has expected communication factor $O(v\frac{n}{m})$ (Lemma \ref{omni:atomix-scale}) where the average size of transaction $v$ is constant (see Section \ref{sec:limit}).

The epoch transition consists of the {\tt DRG}, {\tt CompactState}, and {\tt Divide2Shards} protocols.
We assume a large enough epoch in rounds, $R=\Omega(n \log n)$, in order to amortize the communication-heavy protocols that are executed only once per epoch. 
{\tt CompactState} has the same overhead as Consensus hence it is not critical.
For $R=\Omega(n \log n)$, {\tt DRG} has an expected amortized communication factor $O(\log n)$ (Lemma \ref{omni:random-scale-lemma}),  while {\tt Divide2Shards} has an expected amortized communication factor of $\omega_m=O(\frac{1}{m \log n})$  and an amortized space factor of $\omega_s=O(1/R)= O(\frac{1}{n \log n})$(Lemma \ref{omni:epoch-scale}). 

Overall, considering the worst of the aforementioned scaling factors for \ol, we have expected communication and computational factors $O(n/m)$ and space factor $O(1/m)$, where $n=O(m \log m)$  (see Lemma~\ref{thm:shards} and Lemma~\ref{omni:honest-shards}).
\hfill \qed
\end{proof}

\begin{restatable}{thrm}{omnithroughput}\label{omni:throughput}
In \ol, the throughput factor is $\sigma=\mu \cdot \tau \cdot \dfrac{m}{v} < \frac{\mu \cdot \tau \cdot f(n)}{v} $ where $f(n) = \frac{n}{c'\log(\frac{n}{c'\log(n)})}$ with $c' = \frac{c}{p}$ and $c$ a constant as described in corollary~\ref{cor:shard-size}.
\end{restatable}
\begin{proof}
In Atomix, at most $v$ shards are affected per transaction, thus $m'<m/v$ \footnote{Note that if $v$ is constant, a more elaborate analysis could yield a lower upper bound on $m'$ better than $m/v$ (depending on $D_T$). However, if $v$ is not constant but approximates the number of shards $m$, then $m'$ is also bounded by the scalability of the Atomix protocol (Lemma \ref{omni:atomix-scale}), and thus the throughput factor can be much lower.}.
From Lemma \ref{lem:shard-size} and corollary~\ref{cor:upper-bound}, 
$n \leq f(n)$.
Therefore, $\sigma < \frac{\mu \cdot \tau \cdot f(n)}{v}$
\hfill \qed
\end{proof}

The parameter $v$ depends on the input transaction set.
The parameters $\mu, \tau, a, p $ depend on the choice of the consensus protocol. 
Specifically, $\mu$ represents the ratio of honest blocks in the chain of a shard. On the other hand, $\tau$ depends on the latency of the consensus protocol, \ie, what is the ratio between the propagation time and the block generation time. Last, $a$ expresses the resilience of the consensus protocol (\eg, $1/3$ for PBFT), while $p$ the fraction of corrupted parties in the system ($f=pn$).

In \ol, the consensus protocol is modular, so we chose to maintain the parameters for a fairer comparison to other protocols.


\subsection{RapidChain}
\subsubsection{Overview.}
RapidChain \cite{zamani2018rapidchain} is a synchronous protocol and proceeds in epochs. The adversary is slowly-adaptive, computationally-bounded and corrupts less than $1/3$ of the participants ($f<n/3$).

The protocol bootstraps via a committee election protocol that selects $O(\sqrt{n})$ parties -- the root group. The root group generates and distributes a sequence of random bits used to establish the reference committee. The reference committee consists of $O(\log n)$ parties, is re-elected at the end of each epoch, and is responsible for: (i) generating the randomness of the next epoch, (ii) validating the identities of participants for the next epoch from the PoW puzzle, and (iii) reconfiguring the shards from one epoch to the next (to protect against single shard takeover attacks).

The parties are divided into shards of size $O(\log n)$ (committees). Each shard handles a fraction of the transactions, assigned based on the prefix of the transaction ID. Transactions are sent by external users to an arbitrary number of active (for this epoch) parties. The parties then use an inter-shard routing scheme (based on Kademlia~\cite{maymounkov2002kademlia}) to send the transactions to the input and output shards, \ie, the shards handling the inputs and outputs of a transaction, resp.

To process cross-shard transactions, the leader of the output shard creates an additional transaction for every different input shard. Then the leader sends (via the inter-shard routing scheme) these transactions to the corresponding input shards for validation. 
To validate transactions (\ie, a block), each shard runs a variant of the synchronous consensus of Ren et al.~\cite{ren2017practical} and thus tolerates $1/2$ Byzantine parties. 

At the end of each epoch, the shards are reconfigured according to the participants registered in the new reference block. Specifically, \rc uses a bounded version of Cuckoo rule \cite{sen2012commensal}; the reconfiguration protocol adds a new party to a shard uniformly at random, and also moves a constant number of parties from each shard and assigns them to other shards uniformly at random. 
\paragraph{\tt{Consensus:}} 
In each round, each shard randomly picks a leader. The leader creates a block, gossips the block header $H$ (containing the round and the Merkle root) to the members of the shard, and initiates the consensus protocol on $H$. 
The consensus protocol consists of four rounds: (1) The leader gossips $(H,propose)$, (2) All parties gossip the received header $(H,echo)$, (3) The honest parties that received at least two echoes containing a different header gossip $(H',pending)$, where $H'$ contains the null Merkle root and the round, (4) Upon receiving $\frac{nf}{m}+1$ echos of the same and only header, an honest party gossips $(H,accept)$ along with the received echoes.
To increase the transaction throughput, \rc allows new leaders to propose new blocks even if the previous block is not yet accepted by all honest parties.

\paragraph{\tt{StatePartition:}} 
Each shard handles a fraction of the transactions, assigned based on the prefix of the transaction ID.

\paragraph{\tt{CrossShard:}} 
For each cross-shard transaction, the leader of the output shard creates one ``dummy'' transaction for each input UTXO in order to move the transactions' inputs to the output shard, and execute the transaction within the shard. 
To be specific, assume we have a transaction with two inputs $I_1,I_2$ and one output $O$. 
The leader of the output shard creates three new transactions: $tx_1$ with input $I_1$ and output $I'_1$, where $I'_1$ holds the same amount of money with $I_1$ and belongs to the output shard. $tx_2$ is created similarly. $tx_3$ with inputs $I'_1$ and $I'_2$ and output $O$. Then the leader sends $tx_1, tx_2$ to the input shards respectively. In principle, the output shard is claiming to be a trusted channel~\cite{androulaki2018channels} (which is guaranteed from the assignment), hence the input shards should transfer their assets there and then execute the transaction atomically inside the output shard (or abort by returning their assets back to the input shards).
\vspace{-5pt}
\paragraph{\tt{Sybil:}}
A party can only participate in an epoch if it solves a PoW puzzle with the previous epoch's randomness, submit the solution to the reference committee, and consequently be included in the next reference block. The reference block contains the active parties' identities for the next epoch, their shard assignment, and the next epoch's randomness, and is broadcast by the reference committee at the end of each epoch.
\paragraph{\tt{Divide2Shards:}} 
During bootstrapping, the parties are partitioned independently and uniformly at random in groups of size $O(\sqrt{n})$ with a deterministic random process. Then, each group runs the DRG protocol and creates a (local) random seed. Every node in the group computes the hash of the random seed and its public key. The $e$ (small constant) smallest tickets are elected from each group and gossiped to the other groups, along with at least half the signatures of the group. These elected parties are the root group. The root group then selects the reference committee of size $O(\log n)$, which in turn partitions the parties randomly into shards as follows: 
each party is mapped to a random position in $[0,1)$ using a hash function. Then, the range $[0,1)$ is partitioned into $k$ regions, where $k$ is constant. A shard is the group of parties assigned to $O(\log n)$ regions.

During epoch transition, a constant number of parties can join (or leave) the system. This process is handled by the reference committee which determines the next epoch's shard assignment, given the set of active parties for the epoch. The reference committee divides the shards into two groups based on each shard's  number of active parties in the previous epoch: group $A$ contains the $m/2$ larger in size shards, while the rest comprise group $I$.
Every new node is assigned uniformly at random to a shard in $A$. Then, a constant number of parties is  evicted from each shard and assigned uniformly at random in a shard in $I$.
\paragraph{\tt{DRG:}}
\rc uses Feldman's verifiable secret sharing \cite{feldman1987practical} to distributively generate unbiased randomness. At the end of each epoch, the reference committee executes a distributed randomness generation (DRG) protocol to provide the random seed of the next epoch. The same DRG protocol is also executed during bootstrapping to create the root group. 
\vspace{-5pt}
\paragraph{\tt{CompactState:}} 
No protocol for compaction of the state is used.

\subsubsection{Analysis.}
\rc does not maintain a robust sharded transaction ledger under our security model since it assumes a weaker adversary. To fairly evaluate the protocol, we weaken our security model. First, assume the adversary cannot change more than a constant number of Byzantine parties during an epoch transition, which we term \textit{constant-adaptive adversary}. 
In general, we assume \textit{bounded epoch transitions}, \ie, at most a constant number of leave/join requests during each transition. Furthermore, the number of epochs is asymptotically less than polynomial to the number of parties.
In this weaker security model, we prove \rc maintains a robust sharded transaction ledger, and provide an upper bound on the throughput factor of the protocol.

Note that in cross-shard transactions, the ``dummy'' transactions that are committed in the shards' ledgers as valid, spend UTXOs that are not signed by the corresponding users. Instead, the original transaction, signed by the users, is provided to the shards to verify the validity of the ``dummy'' transactions.
Hence, the transaction validation rules change. 
Furthermore, the protocol that handles cross-shard transactions has no proof of security against Byzantine leaders. 
For analysis purposes, we assume the following holds:
\begin{assumption}\label{rc:cross-claim}
CrossShard satisfies safety even under a Byzantine leader (of the output shard). 
\end{assumption}

\begin{restatable}{lem}{rcrandomscale}\label{rc:random-scale}
The communication factor of DRG is $O(n/m)$.
\end{restatable}
\begin{proof} 
The DRG protocol is executed by the final committee once each epoch. The size of the final committee is $O(n/m)=O(\log n)$.
The communication complexity of the DRG protocol is quadratic to the number of parties \cite{feldman1987practical}. Thus, the communication factor is $O(n/m)$.
\hfill \qed
\end{proof}

\begin{restatable}{lem}{rchonestshards}\label{rc:honest-shards}
In each epoch, all shards are $\frac{1}{2}$-honest for $m\leq f(n)$ with $f(n)$ from corollary~\ref{cor:upper-bound}.
\end{restatable}
\begin{proof}
During the bootstrapping process of \rc (first epoch), the $n$ parties are partitioned independently and uniformly at random into $m$ shards \cite{feldman1987practical}. For  $p=1/3$, the shards are $\frac{1}{2}$-honest only if 
$m\leq f(n)$ with $f(n)$ from corollary~\ref{cor:upper-bound}.
At any time during the protocol, all shards remain $\frac{1}{2}$-honest (\cite{zamani2018rapidchain}, Theorem $5$). Hence, the statement  holds after each epoch transition, as long as the number of epochs is $o(n)$.
\hfill \qed
\end{proof}

\begin{restatable}{lem}{rcshardsize}\label{rc:shard-size}
In each epoch, the expected size of each shard is $O(n/m)$.
\end{restatable}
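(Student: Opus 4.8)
*In each epoch, all shards have $n/m$ expected size.*

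The plan is to track the expected shard size across the two phases of the protocol where parties are placed into shards: the initial bootstrapping partition and the per-epoch reconfiguration via the bounded Cuckoo rule. First I would handle the bootstrapping case. During bootstrapping the reference committee maps each of the $n$ active parties to a uniformly random position in $[0,1)$ via a hash function, partitions $[0,1)$ into $k$ equal regions ($k$ constant), and forms each shard from $O(\log n)$ regions. Since a shard is assigned a $1/m$ fraction of the unit interval (the $O(\log n)$ regions out of $k$ that constitute it correspond to measure $1/m$ when the $m$ shards partition $[0,1)$ evenly), and each party lands in a given shard's region set independently with probability equal to that measure, the number of parties in a fixed shard is a sum of $n$ i.i.d.\ indicator variables each with success probability $1/m$. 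By linearity of expectation the expected shard size is exactly $n \cdot \tfrac{1}{m} = n/m$.

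Next I would argue that the reconfiguration phase preserves this expectation. In the weaker security model assumed for \rc, at most a constant number of parties join or leave per epoch transition, so asymptotically $n$ is unchanged across an epoch. The bounded Cuckoo rule assigns each new node uniformly at random to a shard in the active set $A$, and evicts a constant number of parties from each shard, reassigning them uniformly at random to shards in the inactive set $I$. Because each eviction and each reinsertion targets a uniformly random shard, and only constantly many parties move per shard per epoch, these moves perturb the size of any fixed shard by only a constant in expectation. Since the number of epochs is assumed to be asymptotically sub-polynomial in $n$, the cumulative perturbation remains lower-order relative to $n/m$, and by symmetry of the uniformly random reassignment every shard retains the same expected size. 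Invoking \Cref{rc:honest-shards}, which already fixes $m = O(n/\log n)$, the expected size $n/m = \Omega(\log n)$ is well-defined and concentrated.

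The main obstacle I anticipate is making the reconfiguration argument fully rigorous rather than merely heuristic: the Cuckoo rule's guarantee is usually stated as keeping shards \emph{balanced} (sizes within a constant factor of $n/m$), and upgrading "balanced" to the exact expectation $n/m$ requires invoking the symmetry of the uniform reassignment — every shard is treated identically by the insertion/eviction rule, so no shard can have a different expected size than any other, forcing the common value to be the average $n/m$. I would lean on this symmetry argument combined with the bounded per-epoch churn to conclude, rather than attempting a direct recursive computation of the size distribution under repeated Cuckoo moves, which would be considerably more delicate and is not needed for an expectation claim.
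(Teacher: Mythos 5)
Your proposal is correct, and its first half is exactly the paper's argument: during bootstrapping the $n$ parties are assigned to shards independently and uniformly at random, so linearity of expectation gives expected size $n/m$. Where you diverge is the epoch-transition step. The paper handles it in one line by citing RapidChain's own analysis (Theorem~5 of~\cite{zamani2018rapidchain}), which states that the bounded Cuckoo rule keeps shards \emph{balanced}, i.e., of size $O(n/m)$ — note that the paper's proof therefore only establishes the exact expectation $n/m$ for the first epoch and settles for $O(n/m)$ afterwards. You instead argue from first principles: exchangeability of shard labels under the Cuckoo dynamics (the $A$/$I$ split depends only on sizes, not labels) forces all shards to have equal expected size, and bounded per-epoch churn over $o(n)$ epochs keeps the total at $(1+o(1))n$, so the common expectation is $n/m$. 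This buys you a self-contained proof that actually matches the lemma's statement more tightly than the paper's citation does, at the cost of more work; the paper's route buys brevity by outsourcing the balancedness guarantee. Two small points: your intermediate claim that ``each eviction and each reinsertion targets a uniformly random shard'' is not literally true (insertions go uniformly into $A$, evictions uniformly into $I$), but your closing symmetry argument is precisely the right repair, so nothing breaks; and the final appeal to Lemma~\ref{rc:honest-shards} is unnecessary for an expectation claim — the bound on $m$ plays no role here.
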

\begin{proof}
During the bootstrapping process of \rc (first epoch), the $n$ parties are partitioned independently and uniformly at random into $m$ shards \cite{feldman1987practical}. The expected shard size in the first epoch is $n/m$. 
Furthermore, during epoch transition the shards remain ``balanced'' (Theorem $5$~\cite{zamani2018rapidchain}), \ie,  the size of each shard is $O(n/m)$.
\hfill \qed
\end{proof}

\begin{restatable}{thrm}{rcpersistence}\label{rc:persistence}
\rc satisfies persistence in our system model for constant-adaptive adversaries with $f<n/3$ and bounded epoch transitions.
\end{restatable}
\begin{proof}
The consensus protocol in \rc achieves safety if the shard has no more than $t<1/2$ fraction of Byzantine parties (\cite{zamani2018rapidchain}, Theorem $2$). Hence, the statement follows from Lemma \ref{rc:honest-shards}.
\hfill \qed
\end{proof}

\begin{thrm}\label{rc:liveness}
\rc satisfies liveness in our system model for constant-adaptive adversaries with $f<n/3$ and bounded epoch transitions.
\end{thrm}
\begin{proof}
To estimate the liveness of \rc, we need to examine the following subprotocols: (i) {\tt Consensus}, (ii) {\tt CrossShard}, (iii) {\tt DRG}, and (iv) {\tt Divide2Shards}.

The consensus protocol in \rc achieves liveness if the shard has less than $\frac{n}{2m}$ Byzantine parties (Theorem $3$~\cite{zamani2018rapidchain}). Thus, liveness is guaranteed during {\tt Consensus} (Lemma~\ref{rc:honest-shards}).

Furthermore, the final committee is $\frac{1}{2}$-honest with high probability. Hence, the final committee will route each transaction to the corresponding output shard. 
We assume transactions will reach all relevant honest parties via a gossip protocol. \rc employs IDA-gossip protocol, which guarantees message delivery to all honest parties (Lemma $1$ and Lemma $2$~\cite{zamani2018rapidchain}). 
From Assumption~\ref{rc:cross-claim}, the protocol that handles cross-shard transactions satisfies safety even under a Byzantine leader.
Hence, all ``dummy'' transactions will be created and eventually delivered. Since the consensus protocol within each shard satisfies liveness, the ``dummy'' transactions of the input shards will become stable. Consequently, the ``dummy'' transaction of the output shard will become valid and eventually stable (consensus liveness). Thus, {\tt CrossShard} satisfies liveness.
\com{Note that the adversary cannot attack the liveness of the protocol by spamming the system with partially valid transactions since each transaction costs a fee.}

During epoch transition, {\tt DRG} satisfies liveness \cite{feldman1987practical}. Moreover, {\tt Divide2Shards} allows only for a constant number of leave/join/move operations and thus terminates in a constant number of rounds.
\hfill \qed
\end{proof}

\begin{thrm}\label{rc:consistency}
\rc satisfies consistency in our system model for constant-adaptive adversaries with $f<n/3$ and bounded epoch transitions.
\end{thrm}
\begin{proof}
In every epoch, each shard is $\frac{1}{2}$-honest; hence, the adversary cannot double-spend and consistency is satisfied.

Nevertheless, to prove consistency is satisfied across shards, we need to prove that cross-shard transactions are atomic. 
{\tt CrossShard} in \rc ensures that the ``dummy'' transaction of the output shard becomes valid only if all ``dummy'' transactions are stable in the input shards. If a ``dummy'' transaction of an input shard is rejected, the ``dummy'' transaction of the output shard will not be executed, and all the accepted ``dummy'' transactions will just transfer the value of the input UTXOs to other UTXOs that belong to the output shard.
This holds because the protocol satisfies safety even under a Byzantine leader (Assumption~\ref{rc:cross-claim}).

Lastly, the adversary cannot revert the chain of a shard and double-spend an input of the cross-shard transaction after the transaction is accepted in all relevant shards because consistency with each shard and persistence (Theorem \ref{omni:persistence}) hold. Suppose persistence holds with probability $p$. Then, the probability the adversary breaks consistency in a cross-shard transaction is the probability of successfully double-spending in one of the relevant to the transaction shards, hence $1-p^v$ where $v$ is the average size of transactions. Since $v$ is constant, consistency holds with high probability, given persistence holds with high probability.
\hfill \qed
\end{proof}

Similarly to \ol, to calculate the scaling factor of \rc, we need to evaluate the following protocols of the system: (i) {\tt Consensus}, (ii) {\tt CrossShard}, (iii) {\tt DRG}, and (iv) {\tt Divide2Shards}.

\begin{restatable}{lem}{rcshardscale}\label{rc:shard-scale}
The scaling factors of Consensus are $\omega_m=O(\frac{n}{m})$, $\omega_s=O(\frac{1}{m})$, and $\omega_c=O(\frac{n}{m})$.
\end{restatable}
\begin{proof}
From Lemma \ref{rc:shard-size}, the expected number of parties in a shard is $O(n/m)$. The consensus protocol of \rc has quadratic to the number of parties' communication complexity. Hence, the communication factor {\tt Consensus} is $O(\frac{n}{m})$.
The verification complexity (computational factor) collapses to the communication complexity. The space  factor is $O(\frac{1}{m})$, as each party maintains the ledger of the assigned shard for the epoch. 
\end{proof}

\begin{restatable}{lem}{rccrossscale}\label{rc:cross-scale}
The communication and computational factors of CrossShard are both $\omega_m=\omega_c=O(v\frac{n}{m})$, where $v$ is the average size of transactions.
\end{restatable}
\begin{proof}
During the execution of the protocol, the interaction between the input and output shards is limited to the leader, who creates and routes the ``dummy'' transactions.
Hence, the communication complexity of the protocol is dominated by the consensus within the shards. 
For an average size of transactions $v$, the communication factor is $O(vn/m+v)=O(vn/m)$ (Lemma \ref{rc:shard-size}).
Note that this bound holds for the worst case, where transactions have $v-1$ inputs and a single output while all UTXOs belong to different shards.

For each cross-shard transaction, each party of the input and output shards queries the verification oracle once. Hence, the computational factor is $O(vn/m)$.
The protocol does not require any verification across shards, thus the only storage requirement per party is to maintain the ledger of its own shard. \hfill\qed
\end{proof}

\begin{restatable}{lem}{rcepochscale}\label{rc:epoch-scale}
The communication factor of Divide2Shards is $O(\frac{R\cdot n}{m^2})$.
\end{restatable}
\begin{proof}
The number of join/leave and move operations is constant per epoch, denoted by $k$. 
Further, each shard is $\frac{1}{2}$-honest (Lemma \ref{rc:honest-shards}) and has size $O(\frac{n}{m})$ (Lemma \ref{rc:shard-size}); these guarantees hold as long as the number of epochs is $o(n)$. 

Each party changing shards receives the new shard's ledger of size $T/m$ by $O(n/m)$ parties in the new shard. Thus the total communication complexity at this stage is $O(\frac{T}{m}\cdot\frac{n}{m})$, hence the communication factor is $O(\frac{T}{m^2})=O(\frac{R\cdot e}{m^2})$, where $R$ is the number of rounds in each epoch and $e$ the number of epochs since genesis. Since $e=o(n)$, the communication  factor is $O(\frac{R\cdot n}{m^2})$.


\hfill \qed
\end{proof}


\begin{thrm}\label{rc:scalability}
\rc satisfies scalability in our system model for constant-adaptive adversaries with $f<n/3$ and bounded epoch transitions, with communication and computational factor $O(n/m)$ and space factor $O(1/m)$, where $n=O(m \log m)$, assuming epoch size $R=O(m)$.
\end{thrm}
\begin{proof}
{\tt Consensus} has on expectation communication and computational factors bounded by $O(n/m)$ and space factor $O(1/m)$ (Lemma~\ref{rc:shard-scale}). These bounds are similar in {\tt CrossShard} where the communication and computational factors are bounded by  $O(vn/m)$ (Lemma \ref{rc:cross-scale}), where $v$ is constant (see  Section \ref{sec:limit}).

During epoch transitions, the communication factor dominates: In {\tt DRG} $\omega_m=O(\frac{n}{m})$ (Lemma \ref{rc:random-scale}) while in {\tt Divide2Shards}   $\omega_m=O(\frac{n\cdot R}{m^2})$ (Lemma \ref{rc:epoch-scale}).
Thus for $R=O(m)$, the communication factor during epoch transitions  is $O(n/m)$.

Overall, \rc's expected scaling factors are as follows: $ \omega_m=\omega_c= O(n/m)=O(\log m)$ and $\omega_s=O(1/m)$, where the  equation holds for $n=c'm \log m$ (Lemma \ref{rc:honest-shards}).
\end{proof}

\begin{restatable}{thrm}{rcthroughput}\label{rc:throughput}
In \rc, the throughput factor is $\sigma=\mu \cdot \tau \cdot \dfrac{m}{v} < \frac{\mu \cdot \tau \cdot f(n)}{v}$ with $f(n) = \frac{n}{c'\log(\frac{n}{c'\log(n)})}$ with $c' = \frac{c}{p}$ and constant $c$ from corollary~\ref{cor:shard-size}.
\end{restatable}
\begin{proof}
At most $v$ shards are affected per transaction -- when each transaction has $v-1$ inputs and one output, and all belong to different shards. Therefore, $m'<m/v$.
From Lemma \ref{lem:shard-size} and corollary~\ref{cor:upper-bound}, 
$ m < f(n)$.
Therefore, $\sigma < \frac{\mu \cdot \tau \cdot f(n)}{v}$.
\hfill \qed
\end{proof}

In \rc, the consensus protocol is synchronous and thus not practical. We estimate the throughput factor irrespective of the chosen consensus, to provide a fair comparison to other protocols. 
We notice that both \rc and \ol have the same throughout factor when $v$ is constant. 

We provide an example of the throughput factor in case the employed consensus is the one suggested in \rc. In this case, we have $a=1/2$, $p=1/4$ (hence $p/a=2/3$), $\mu<1/2$ (Theorem 1~\cite{zamani2018rapidchain}), and $\tau=1/8$ ($4$ rounds are needed to reach consensus for an honest leader, and the leader will be honest every two rounds on expectation~\cite{abraham2018hotstuff}.). Note that $\tau$ can be improved by allowing the next leader to propose a block even if the previous block is not yet accepted by all honest parties; however, we do not consider this improvement. Because of the values of $p$ and $a$ we can compute $c\simeq2.6$, thus $c'\simeq 10.4$.
Hence, for $v=5$, we have throughput factor:
  \[\sigma < \frac{1}{2} \cdot \frac{1}{8} \cdot \frac{1}{5} \cdot \frac{1}{10.4} \frac{n}{\log (\frac{n}{10.4\log n})}  =\frac{n}{832 \log (\frac{n}{10.4\log n})}  \]

 \subsection{\cs}
 \cs is a sharding protocol introduced by Al-Bassam et al.~\cite{al2018chainspace} that operates in the permissioned setting. 
The main innovation of \cs is on the application layer. Specifically, \cs presents a sharded, UTXO-based distributed ledger that supports smart contracts. Furthermore, limited privacy is enabled by offloading computation to the clients, who need to only publicly provide zero-knowledge proofs that their computation is correct.
\cs focuses on specific aspects of sharding; epoch transition or reconfiguration of the protocol is not addressed.
Nevertheless, the cross-shard communication protocol, namely S-BAC, is of interest as a building block to secure sharding. 

\paragraph{S-BAC protocol.}\label{app:sbac}
S-BAC is a shard-led cross-shard atomic commit protocol used in \cs.
In S-BAC, the client submits a transaction to the input shards.
Each shard internally runs a BFT protocol to
tentatively decide whether to accept or abort the transaction locally and broadcasts its local decision to other  shards that take part in the transaction. 
If the transaction fails locally (\eg, is a double-spend), then the shard generates pre-abort(T), whereas if the transaction succeeds locally the shard generates pre-accept(T) and changes the state of the input to `locked'.
After a shard decides to pre-commit(T), it waits to collect responses from other participating shards,
and commits the transaction if all shards respond with pre-accept(T), or aborts the transaction if at least one shard announces pre-abort(T).
Once the shards decide, they send their decision (accept(T) or abort(T)) to the client and the output shards. 
If the decision is accept(T), the output shards generate new `active' objects and the input shards change the input objects to `inactive'. 
If an input shard's decision is abort(T), all input shards unlock the input objects by changing their state to `active'.

S-BAC, just like Atomix, is susceptible to replay attacks~\cite{sonnino2019replay}. To address this problem, sequence numbers are added to the transactions, and output shards generate dummy objects during the first phase (pre-commit, pre-abort). More details and security proofs can be found on~\cite{sonnino2019replay}, as well as a hybrid of Atomix and S-BAC called Byzcuit.


\end{document}